\newif\ifUSENIX
\newif\ifANON
\newif\ifextra
\setlist{noitemsep}
\definecolor{Gray}{gray}{0.9}
\newcommand{\zo}{\{0,1\}}
\renewcommand{\vec}[1]{\mathbf{#1}}
\newcommand{\mat}[1]{\vec{#1}}
\newcommand{\Y}{\mat{Y}}
\newcommand{\y}{\vec{y}}
\newcommand{\X}{\mat{X}}
\newcommand{\x}{\vec{x}}
\newcommand{\z}{\vec{z}}
\newcommand{\Z}{\mat{Z}}
\renewcommand{\c}{\vec{c}}
\renewcommand{\subset}{\subseteq}
\newcommand{\bg}{{\mathsf{big}}}
\newcommand{\sz}{{\mathsf{size}}}
\newcommand{\sm}{{\mathsf{sml}}}
\newcommand{\Bern}{\mathsf{Bern}}
\newcommand{\Norm}{\mathsf{N}}
\newcommand{\Binom}{\mathsf{Bin}}
\newcommand{\Pois}{\mathsf{Pois}}
\newcommand{\e}{\vec{e}}
\newcommand{\Att}{\mathcal{U}}
\renewcommand{\hat}[1]{\widehat{#1}}
  \newcommand{\bigmid}{\;\big\vert\;}
  \newcommand{\Bigmid}{\;\Big\vert\;}
  \newcommand{\indic}[1]{\mathbb{I}(#1)}
  \newcommand{\adv}{\mathsf{A}}
  \newcommand{\A}{\adv}
  \newcommand{\bbR}{\mathbb{R}}
  \newcommand{\bbN}{\mathbb{N}}
  \newcommand{\eps}{\varepsilon}
  \newcommand{\negl}{\mathrm{negl}}
\newtheorem{theorem}{Theorem}[section]
\newtheorem{claim}[theorem]{Claim}
\newtheorem{observation}[theorem]{Observation}
\newtheorem{example}[theorem]{Example}
\theoremstyle{definition} %
\newtheorem{definition}{Definition}[section]
  \newcommand{\cnt}{\EA}
  \newcommand{\EA}{\mathsf{EA}}
  \newcommand{\ambig}{\mathsf{amb}}
  \newcommand{\UEA}{\mathsf{EA_\ambig}}
  \def\E{\operatorname*{\mathbb{E}}}
  \def\poly{\mathop{\rm{poly}}\nolimits}
  \newcounter{notecounter}
\renewcommand{\c}{\vec{c}}
\begin{document}
\title{Attacks on Deidentification's Defenses}

\author{{\rm Aloni Cohen}\thanks{
We thank Kobbi Nissim for many helpful discussions; Ryan Sullivan for early discussions on EdX; and Gabe Kaptchuk, anonymous reviewers, and especially Mayank Varia for generous feedback.
This work was primarily done at Boston University's Hariri Institute of Computing and School of Law.
This work was supported by the DARPA SIEVE program under Agreement No. HR00112020021 and the National Science Foundation under Grant Nos. CNS-1915763 and SaTC-1414119. Any opinions, findings, and conclusions or recommendations expressed in this
material are those of the author and do not reflect the views of our funders.}  \\ University of Chicago }

\maketitle

\begin{abstract}
  Quasi-identifier-based deidentification techniques (QI-deidentification) are widely used in practice, including $k$-anonymity, $\ell$-diversity, and $t$-closeness.
  We present three new attacks on QI-deidentification: two theoretical attacks and one practical attack on a real dataset. In contrast to  prior work, our theoretical attacks work even if every attribute is a quasi-identifier. Hence, they apply to $k$-anonymity, $\ell$-diversity, $t$-closeness, and most other QI-deidentification techniques.

  First, we introduce a new class of privacy attacks called \emph{downcoding attacks}, and prove that every QI-deidentification scheme is vulnerable to downcoding attacks if it is minimal and hierarchical.
  Second, we convert the downcoding attacks into powerful \emph{predicate singling-out (PSO)} attacks, which were recently proposed as a way to demonstrate that a privacy mechanism fails to legally anonymize under Europe's General Data Protection Regulation.
  Third, we use LinkedIn.com to reidentify 3 students in a $k$-anonymized dataset published by EdX (and show  thousands are potentially vulnerable), undermining EdX's claimed compliance with the Family Educational Rights and Privacy Act.

  The significance of this work is both scientific and political.
  Our theoretical attacks demonstrate that QI-deidentification may offer no protection even if every attribute is treated as a quasi-identifier.
  Our practical attack demonstrates that even deidentification experts acting in accordance with strict privacy regulations fail to prevent real-world reidentification.
  Together, they rebut a foundational tenet of QI-deidentification and challenge the actual arguments made to justify the continued use of $k$-anonymity and other QI-deidentification techniques.
\end{abstract}

\ifUSENIX

\fi

\section{Introduction}
\label{sec:intro}

Quasi-identifier-based deidentification (QI-deidentification) is widely used in practice.
The most well known QI-deidentification techniques are is $k$-anonymity \cite{SamaratiS98}.
Throughout this work we usually speak about $k$-anonymity specifically, but everything applies without modification to $\ell$-diversity \cite{MKGV07}, $t$-closeness \cite{t-closeness}, and many other QI-deidentification refinements.

A relatively small number of data points suffice to distinguish individuals from the general population. For example, in the 2010 census 44\% of the population was unique based only on census block, age, and sex \cite{AbowdDeclaration}.
Turning this insight into a privacy notion, $k$-anonymity aims to capture a sort of anonymity of a crowd.

A data release is $k$-anonymous if any individual row in the release cannot be distinguished from $k-1$ other individuals in the release using certain attributes called \emph{quasi-identifiers}.
Quasi-identifiers are sets of attributes that are potentially available to an attacker from other sources, combinations of which may uniquely distinguish an individual within the dataset.
$k$-anonymity requires that the {equivalence class} of every record---the set of records with identical quasi-identifiers---is of size at least $k\ge 2$. A common choice for $k$ is $5$.\footnote{For example, U.S.\ Department of Education's FAQ on disclosure avoidance states that ``statisticians consider a cell size of 3 to be the absolute minimum although larger minimums (e.g., 5 or 10) may be used to further mitigate disclosure risk'' (\url{https://studentprivacy.ed.gov/resources/frequently-asked-questions-disclosure-avoidance}).  Based on this language, EdX chose $k=5$.}
$\ell$-diversity, $t$-closeness, and many QI-deidentification techniques refine $k$-anonymity in the sense that they collapse to $k$-anonymity when every attribute is treated as a quasi-identifier (Sec.~\ref{sec:QI-survey}).

Real world reidentification attacks, including on the Netflix and AOL datasets \cite{narayanan2008robust, barbaro_zeller_2006}, led to a policy debate about the QI-deidentification. Critics argued that the distinction between quasi-identifying attributes and other attributes---foundational to the whole approach---was untenable \cite{BrokenPromises, myths-and-fallacies}. Defenders argued that deidentification experts are good at determining what information is externally available \cite{cavoukian2014identification, ontario2014big}.
The debate left unspoken and unexamined the core tenet of QI-deidentification: that if \emph{every} attribute is treated as a quasi-identifier, then $k$-anonymity provides meaningful protection. Our work is the first to directly challenge that tenet.

\paragraph{Motivation}
Why bother attacking QI-deidentification?
After all, the security and privacy research communities don't put much stock in these techniques. For example, it is well known that contrived mechanisms can formally satisfy $k$-anonymity but provide no protection.
Even so, many policymakers and practitioners are convinced that QI-deidentification is effective in the real world.

Our goal in this paper is to rebut the actual arguments that QI-deidentification practioners use to justify its continued use.
We rebut three arguments that---until this work---have gone unchallenged. First, that no attacks have been shown against datasets deidentified by experts and in accordance with strict privacy regulations, let alone simple attacks. Second, that $k$-anonymity provides meaningful protection when every attribute is a quasi-identifier.
Third, that although QI-deidentification doesn't meet cryptographic standards of security, it suffices to meet the obligations in data protection regulation.
We briefly elaborate these three arguments next.

Rhetorically, trust in QI-deidentification hinges on the wholesale dismissal of existing attacks as unconvincing.
Practitioners dismiss many attacked datasets as ``improperly de-identified'' \cite{cavoukian2014identification}.
``Proper de-identification'' must be done by a ``statistical expert'' and in accordance with procedures outlined in regulation \cite{ElEmamSystematic},
the increasing availability of QI-deidentification software notwithstanding.
This argument has proven very effective in policy spheres.
Moreover, practitioners dismiss attacks carried out by privacy researchers  \emph{because} they are privacy researchers. That these attacks are published in ``research based articles within the highly specialized field of computer science'' is used to argue that re-identification {requires} a ``highly skilled `expert'\,'' and therefore is of little concern \cite{ontario2014big}.

Technically, trust in QI-deidentification hinges on an unspoken, unexamined tenet:
\begin{quote}
	\textit{QI-deidentification's tenet: If \emph{every} attribute is treated as quasi-identifying, then $k$-anonymity provides meaningful protection.}
\end{quote}
Treating every attribute as quasi-identifying defines away one major critique of QI-deidentification---namely, that the ex ante categorization of attributes as quasi-identifying or not is untenable and reckless. Moreover, when all attributes are quasi-identifying, the distinctions among $k$-anonymity, $\ell$-diversity, and $t$-closeness collapse (Section~\ref{sec:QI-survey}). Prior attacks against $k$-anonymity fail in this setting.

Legally, the use of QI-deidentification hinges on the gap between the protection required by regulation and the protection desired by the academic research community.
Practitioners claim only that QI-deidentification meets regulatory standards, not security researchers' stringent standards.
For example, cryptographic security definitions typically make no assumptions about the techniques or auxiliary knowledge available to an adversary. However, the European Union's General Data Protection Regulation (GDPR) restricts the adversary's techniques by protecting only against ``means reasonably likely to be used'' by an attacker.\footnote{GDPR, Article 4} Likewise, the United State's Family Educational Rights and Privacy Act (FERPA) restricts the adversary's knowledge by protecting only against an attacker lacking ``personal knowledge of the relevant circumstances.''\footnote{34 CFR \S99.3}

\paragraph{Contributions}
We present three attacks on QI-deidentification schemes: two theoretical attacks and one real world reidentification attack. Together, these attacks undermine the above justifications for the continued use of QI-deidentification.

First, we introduce a new class of privacy attack called \emph{downcoding}, which recovers large fractions of the data hidden by QI-deidentification without any auxiliary knowledge.
In short, downcoding undoes hierarchical generalization.
A downcoding attack takes as input a dataset generalized and recovers some fraction of the generalized data.
We call this downcoding as it corresponds to recoding records down a generalization hierarchy.

We prove that every QI-deidentification scheme is vulnerable to downcoding attacks if it is \emph{minimal} and \emph{hierarchical}.
QI-deidentification is hierarchical if it works by generalizing attributes according to a fixed hierarchy (e.g., city$\to$country$\to$continent). QI-deidentification is minimal if no record is generalized more than necessary to achieve the privacy requirement, in a weak, local sense.
Our downcoding attacks are powered by a simple observation: \emph{minimality leaks information}.
Figures~\ref{fig:downcoding-example} and~\ref{fig:minimality} give simple examples of downcoding and of leakage from minimality, respectively.

Second, we convert our downcoding attacks into powerful \emph{predicate singling-out (PSO)} attacks.
PSO attacks were recently proposed as a way to demonstrate that a privacy mechanism fails to legally anonymize under the GDPR \cite{altman2020hybrid, CohenNissimSinglingOut}.
We introduce a stronger type of PSO attack called \emph{compound} PSO attacks and prove that  minimal hierarchical QI-deidentification enables compound PSO attacks, greatly improving over the prior work.

Our downcoding and PSO attacks are the first attacks on QI-deidentification that work even when every attribute is a quasi-identifier. As such, they apply to QI-deidentification beyond $k$-anonymity, and refute the foundational tenet of QI-deidentification.

Third, we used LinkedIn.com to reidentify 3 students in a $k$-anonymized dataset published by Harvard and MIT from their online learning platform EdX. Despite being ``properly'' $k$-anonymized by ``statistical experts'' in accordance with FERPA, we show that thousands more  students are potentially vulnerable to reidentification and disclosure.

Not only do these attacks rebut the arguments described above, they also show that QI-deidentification fails to satisfy three properties of a worthwhile measure of privacy of a computation, even without resorting to contrived mechanisms. Namely, we show that QI-deidentification mechanisms used in practice aren't robust to post-processing, do not compose, and rely on distributional assumptions on the data for their security.

\paragraph{Organization}
Section~\ref{sec:intro:related} discusses related work.
Section~\ref{sec:prelims} introduces notation and defines $k$-anonymity, along with hierarchical and minimal $k$-anonymity. Section~\ref{sec:downcoding} defines downcoding attacks and proves that minimal hierarchical $k$-anonymous mechanisms enable them.
Section~\ref{sec:pso} defines compound predicate singling-out attacks and proves that minimal hierarchical $k$-anonymous mechanisms enable them.
Section~\ref{sec:edx} describes the EdX dataset and shows that it is vulnerable to reidentification.
Section~\ref{sec:conclusion} concludes that our attacks rebut the three core arguments that support the continued use of QI-deidentification in practice.
The appendix includes additional details and proofs.

\section{Related Work}
\label{sec:intro:related}

Samarati and Sweeney proposed $k$-anonymity for statistical disclosure limitation in 1998 \cite{SamaratiS98, samarati2001protecting, sweeney2002k}.
As new attacks were discovered, $k$-anonymity gave rise to more refined QI-deidentification techniques including $\ell$-diversity, $t$-closeness, and many others (below).

Samarati was the first to study minimality for $k$-anonymity \cite{samarati2001protecting}. Our downcoding attacks build on prior work on {minimality attacks}~\cite{wong2009anonymization, cormode2010minimizing}. These works demonstrate that minimality can be used to infer sensitive attributes and violate $\ell$-diversity, but not $k$-anonymity. They introduce two defenses against their attacks.
  One is yet another refinement of $k$-anonymity called $m$-confidentiality \cite{wong2009anonymization}.
	The second claims that certain anonymization algorithms offer protection for free (i.e., ``methods which only inspect the QI attributes to determine the [equivalence classes]'') \cite{cormode2010minimizing}.
	In contrast, we use minimality to downcode, a new attack that violates $k$-anonymity itself and that defeats both defenses from  prior work.

	Predicate singling-out (PSO) attacks were recently introduced  in the context of data anonymization under Europe's General Data Protection Regulation (GDPR) \cite{CohenNissimSinglingOut}.
	They were proposed as a mathematical test to show that a privacy mechanism fails to legally anonymize data under Europe's General Data Protection Regulation (GDPR) \cite{altman2020hybrid, CohenNissimSinglingOut}. The prior work gives a simple but weak PSO attack against a large class of $k$-anonymous mechanisms. We give much stronger PSO attacks against a restricted class of $k$-anonymous mechanisms.

	Prior work shows that $k$-anonymity does not compose: multiple $k$-anonymous datasets can completely violate privacy when combined \cite{ganta2008composition}. We show for the first time that composition failures can occur in real world uses of $k$-anonymity.

 	Differential privacy (DP) \cite{DMNS06} presents one alternative to QI-deidentification, especially DP synthetic data \cite{pate-gan} or local DP \cite{local-DP}. Switching to DP requires accepting that the resulting data will not provide the one-to-one correspondence with underlying records that makes QI-deidentification so attractive to users and laypeople.

			\subsection{Syntactic de-identification beyond $k$-anonymity}
			\label{sec:QI-survey}

			We reviewed the deidentification definitions included in the most comprehensive survey we could find \cite{k_anon_refinements_survey}.
			Our downcoding attacks apply to any \emph{refinement} of $k$-anonymity: namely, any definition that collapses to $k$-anonymity when every attribute is quasi-identifying. These include:
			\begin{itemize}
				\item $k$-anonymity and variants: $k^m$-, $(\alpha,k)$-, $p$-sensitive-, $(k,p,q,r)$-, and $(\epsilon,m)$-anonymity
				\item $\ell$-diversity and variants: entropy-, recursive-, disclosure-recursive, multi-attribute-, $\ell^+$-, and $(c,\ell)$-diversity
				\item $t$-closeness and variant $(n,t)$-closeness
				\item $m$-invariance, $m$-confidentiality
			\end{itemize}
			Our downcoding attacks don't apply to Anatomy (which doesn't generalize quasi-identifiers at all) or differential privacy (which eschews the quasi-identifier framework all together).
			We have not determined whether the following definitions -- which bound some posterior probability given the deidentified dataset -- refine $k$-anonymity in the relevant sense: $\delta$-presence, $\eps$-privacy, skyline privacy, $(\rho_1,\rho_2)$-privacy, $(c,k)$-safety, and $\rho$-uncertainty.
			
			We leave testing our downcoding attacks on actual deidentification software packages for future work.
		Free to use software packages include
		ARX Anonymization, 		$\mu$-Argus,		sdcMicro,		University of Texas Toolkit, 		Amnesia,		Anonimatron,		Python Mondrian.		All but Python Mondrian implement hierarchical algorithms. ARX Anonymization, sdcMicro, and Amenesia offer some version of local recoding (footnote~\ref{note:local-recoding}). To the best of our knowledge, none guarantee  minimality.

		\begin{figure*}[ht]
		\begin{center}
			\begin{tabular}{ccc}

		$\X = $ \begin{tabular}{crc}
			\toprule
			ZIP & Income & COVID\\
			\midrule
			91010 & \$125k & Yes \\

			91011 & \$105k & No \\

			91012 & \$80k & No \\

			20037 & \$50k & No \\

			20037 & \$20k & No \\

			20037 & \$25k & Yes \\
			\bottomrule
			\end{tabular}
		&
		$\Y = $ \begin{tabular}{ccc}
			\toprule
			ZIP & Income & COVID\\
			\midrule
			9101$\star$ & \$75--150k & $\star$ \\

			9101$\star$ & \$75--150k & $\star$\\

			9101$\star$ & \$75--150k & $\star$ \\

			20037 & \$0--75k & $\star$ \\

			20037 & \$0--75k & $\star$ \\

			20037 & \$0--75k & $\star$\\
			\bottomrule
		\end{tabular}
		&
		$\Z = $ \begin{tabular}{ccc}
			\toprule
			ZIP & Income & COVID\\
			\midrule
			91010 & \$125--150k & $\star$ \\

			9101$\star$ & \$100--125k & $\star$ \\

			9101$\star$ & \$75--150k & $\star$ \\

			20037 & \$0--75k & No \\

			20037 & \$0--75k & $\star$ \\

			20037 & \$25k & Yes \\
			\bottomrule
		\end{tabular}
		\end{tabular}
		\end{center}
		\caption{\label{fig:downcoding-example}An example of downcoding. $\Y$ is a minimal hierarchical $3$-anonymized version of $\X$ (treating every attribute as part of the quasi-identifier and leaving the generalization hierarchy implicit). $\Z$ is a downcoding of $\Y$: it generalizes $\X$ and strictly refines $\Y$.}
		\end{figure*}

			\begin{figure}[ht]
			\begin{center}
				\begin{tabular}{ccc}
			\begin{tabular}{cc}
				\toprule
				Old & Rich \\
				\midrule
				1 & 1 \\

				0 & 0 \\

				1 & 0 \\

				0 & 0 \\
				\bottomrule
				\end{tabular}
			&
			\begin{tabular}{cc}
				\toprule
				 Old & Rich \\
				\midrule
				\rowcolor{Gray}
				$\bigstar_{\scaleto{1}{3.5pt}}$ & $\bigstar_{\scaleto{5}{3.5pt}}$ \\

				$\bigstar_{\scaleto{2}{3.5pt}}$ & $\bigstar_{\scaleto{6}{3.5pt}}$ \\

				\rowcolor{white}
				$\bigstar_{\scaleto{3}{3.5pt}}$ & 0 \\

				$\bigstar_{\scaleto{4}{3.5pt}}$ & 0 \\
				\bottomrule
				\end{tabular}
			&
			\begin{tabular}{cc}
				\toprule
				Old & Rich \\
				\midrule
				\rowcolor{Gray}
				1 & $\bigstar_{\scaleto{7}{3.5pt}}$ \\

				\rowcolor{white}
				0 & 0 \\

				\rowcolor{Gray}
				1 & $\bigstar_{\scaleto{8}{3.5pt}}$ \\

				\rowcolor{white}
				0 & 0 \\
				\bottomrule
				\end{tabular}
			\end{tabular}
			\end{center}
			\caption{\label{fig:minimality} An example of minimality and inferences from minimality. Attributes are binary and $\bigstar = \{0,1\}$. The middle and right datasets are both minimal hierarchical $2$-anonymous versions of the left dataset with respect to $Q =$ \{Old, Rich\}. The right dataset is also globally optimal: it generalizes as few attributes as possible.
			Minimality implies that every pair of redacted entries in the same column in matching rows must contain both a 0 and 1. Hence,
				$\{\bigstar_{\scaleto{1}{3.5pt}},\bigstar_{\scaleto{2}{3.5pt}}\}=\{\bigstar_{\scaleto{3}{3.5pt}},\bigstar_{\scaleto{4}{3.5pt}}\}=\{\bigstar_{\scaleto{5}{3.5pt}},\bigstar_{\scaleto{6}{3.5pt}}\}=\{\bigstar_{\scaleto{7}{3.5pt}},\bigstar_{\scaleto{8}{3.5pt}}\}=\{0,1\}$,
			allowing  downcoding. Only one bit of information does not follow directly from minimality of the middle table: whether or not $\bigstar_1 = \bigstar_5$.}
			\end{figure}

\section{Preliminaries}
\label{sec:prelims}

\subsection{Notation}
\label{sec:prelims:notation}

Generally, fixed parameters are denoted by capital letters (e.g., number of dimensions $D$) and indices use the corresponding lowercase letter (e.g., $d = 1, \ldots, D$).
For $a,b\in \bbN$, let $[a,b] = \{a,a+1,\ldots, b\}$ and $[b] = [1,b]$.

$\Att^D$ is a $D$-dimensional data universe, where $\Att$ is the attribute domain. For simplicity we take all attribute domains to be identical, though in reality they are usually distinct (e.g., the EdX dataset).

A \emph{record} $\x = (x_1,\dots,x_D)$ is an element of the data universe.
A \emph{generalized record} $\y$, denoted $(y_1,\dots,y_D)$, is a subset of the data universe specified by the Cartesian product $y_1\times \dots \times y_D$, where $y_d\subset \Att$ for every $d \in [D]$.
Note that a record $\x$ naturally corresponds to the generalized record $(\{x_1\}, \dots, \{x_D\})$, a singleton.
We say $\y$ \emph{generalizes} $\x$ if $\x \in \y$ (i.e., $\forall d,\ x_d \in y_d$).
For example, $\y = (Female, 1970\text{--}1975)$ generalizes $\x=(Female, 1972)$.
For generalized records $\z\subset \y$, we say that $\y$ \emph{generalizes} $\z$ and $\z$ \emph{refines} $\y$.
If $\z\subsetneq \y$, the generalization/refinement is \emph{strict}.

A \emph{dataset} $\X$ is an $N$-tuple of records $(\x_1,\dots,\x_N)$.
$\X$ can be viewed as a matrix with $\X_{n,d}$ the $d$th coordinate of $\x_n$.
A \emph{generalized dataset} $\Y$ is an $N$-tuple of generalized records $(\y_1,\dots,\y_N)$.
For (generalized) datasets $\Y, \Z$, we write $\Z\preceq \Y$ if $\z_{n} \subset \y_{n}$ for all $n$. We extend the meaning of generalization and refinement accordingly. We write $\Z\prec\Y$ when at least one containment is strict.
We call $\y_{n}$ the record in $\Y$ \emph{corresponding to} $\z_n$, and vice-versa.
Note that $\preceq$ is a partial order on datasets of $N$ records from a given data universe.\footnote{%
More generally, we could consider datasets whose rows are permuted relative to one another. Define $\Z \preceq \Y$ if there exists a permutation $\pi:[N]\to[N]$ such that $\z_n \subset \y_{\pi(n)}$ for all $n$, choosing some canonical $\pi$ arbitrarily if more than one exists. Then $\preceq$ is a partial order over equivalence classes of datasets induced by $\Y \sim \Y' \iff \exists \pi\ \forall n\ \y_n = \y'_{\pi(n)}$. We omit this additional complexity for clarity. We believe all our results would hold, mutatis mutandis.}

\subsection{$k$-anonymity}
\label{sec:k-anon}

Formally, $\Y$ is $k$-anonymous if any individual row in the release cannot be distinguished from $k-1$ other individuals \cite{SamaratiS98}.
This requirement is typically parameterized by a subset $Q$ of the attribute domains $Q\subset \{\Att_d\}_{d\in [D]}$ called a \emph{quasi-identifier}.
We denote by $\y(Q)$ the restriction of $\y$ to $Q$. For $\Y = (\y_1, \dots, \y_N)$, we denote by $I(\Y,\y, Q) \triangleq \{n: \y_n(Q) = \y(Q)\}$ the indices of records in $\Y$ that match $\y$ on $Q$ (including $\y$ itself).
Let $\cnt(\Y,\y,Q) = |I(\Y,\y,Q)|$. This is called the \emph{effective anonymity} of $\y$ in $\Y$ with respect to $Q$.

\begin{definition}[$k$-anonymity]
	\label{def:k-anon}
	For $k\ge 2$, $\Y$ is \emph{$k$-anonymous} with respect to $Q$ if for all $\y \in \Y$, $\cnt(\Y,\y,Q) \ge k$.
	An algorithm $M:\X \mapsto \Y$ is \emph{$k$-anonymizer} if for every $\X$, $\Y \gets M(\X)$ is $k$-anonymous \emph{(anonymity)} and generalizes $\X$ \emph{(correctness)}.
	We omit $Q$ when $Q = \Att^D$ is the whole data universe.
\end{definition}

A few remarks are in order.
First, beyond correctness and anonymity, $k$-anonymity places no restriction on the output $\Y$.
Second, the term quasi-identifier is inconsistently defined in the literature.
Our definition of a quasi-identifier as the collection of multiple attributes is from Sweeney \cite{sweeney2002k}.
Quasi-identifier is commonly used to refer to one of the constituent attributes---including by the authors of the EdX dataset \cite{edX-dataset}.
So each \cite{sweeney2002k}-quasi-identifier consists of multiple \cite{edX-dataset}-quasi-identifers.
We adopt the quasi-identifier-as-a-set definition because it simplifies the discussion of the EdX dataset in Section~\ref{sec:edx}.
The distinction disappears in Sections~\ref{sec:downcoding} and~\ref{sec:pso}: our downcoding and PSO attacks work even when every attribute is part of the quasi-identifier (i.e., $Q = \Att^D$).

\subsubsection{Hierarchical $k$-anonymity}

It is easy to contrive $k$-anonymizers that reveal $\X$ completely.
Directing our attention to more natural and widespread mechanisms, we focus on \emph{hierarchical} $k$-anonymizers.

A common way of $k$-anonymizing data is to {generalize} an attribute domain $\Att$ according to a data-independent \emph{generalization hierarchy} $H$ which specifies how a given attribute may be \emph{recoded}.\footnotemark~
Many natural ways of generalizing data fits this mold: using nesting geographies (e.g., city$\to$state$\to$country); dropping digits of postal codes (e.g., $91011\to9101\star \to 910\star\star$); grouping ages into ranges of 5, 10, 25, or 50 years; suppressing attributes or whole records altogether; and the techniques used to create the EdX dataset.

	\footnotetext{
	\label{note:local-recoding}
	Hierarchical algorithms differ on whether they use \emph{local recoding} or \emph{global recoding}.
	Using local recoding, attributes in different records can be generalized to different levels of the hierarchy. Using global recoding, all records must use the same level in the hierarchy for any given attribute.
	We consider local recoding which produces higher quality datasets in general.}

Formally, a generalization hierarchy $H$ defines a structured collection of permissible subsets $\y$ of an attribute domain $\Att$ (Figure~\ref{fig:main-thm-hierarchy}).
$H$ is a rooted tree labelled by subsets of $\Att$, where the subsets on any level of $H$ form a partition of $\Att$ and the partition on every level is a strict refinement of the partition above. The label of the root is $\Att$ itself, and the leaves are all labelled with singletons $\{x\}$.
Identifying $H$ with the set of all its labels, we write $y \in H$ if there is some node in $H$ labelled by $y$. We extend the hierarchy $H$ to the data universe $\Att^D$ coordinate-wise, writing $\y \in H^D$ if $y_d \in H$ for all $d\in[D]$.

\begin{definition}[Hierarchical $k$-anonymity]
$\Y$ \emph{respects} $H$ if $\y\in H^D$ for all $\y \in \Y$.
An algorithm $M:(\X,H)\mapsto \Y$ is a \emph{hierarchical} $k$-anonymizer if for all $\X$ and all hierarchies $H$, $M_H:\X \mapsto M(\X,H)$ is a $k$-anonymizer and its output $\Y = M(\X,H)$ respects $H$.
\end{definition}

Observe that one can always implement hierarchical $k$-anonymity by simply outputting $N$ copies of $\Att^D$.
But a privacy technique that completely destroys the data is not useful, which leads us to consider data quality.

We consider \emph{minimal} mechanisms~\cite{samarati2001protecting}. A mechanism is minimal if no record is generalized more than necessary to achieve the privacy requirement (in a local way).
For example, suppose a $k$-anonymous $\Y$ contains a location attribute. If there is a subset of records whose location ``USA'' can be changed to ``California'' without violating $k$-anonymity, then the mechanism that produced $\Y$ would not be minimal.
 Anoter example is given in Figure~\ref{fig:minimality}.
We call this property \emph{minimality} because it is equivalent to requiring minimality with respect to the partial ordering $\preceq$.
Unlike global optimality, minimality is computationally tractable.

\begin{definition}[Hierarchical minimality]
$M:(\X,H)\mapsto\Y$ is \emph{minimal} if $\Y$ is always minimal in the set of all $H$-respecting, $k$-anonymous $\Y$ that generalize $\X$, partially ordered by $\preceq$. That is, for all strict refinements $\Z \prec \Y$, either: (a) $\Z$ is not $k$-anonymous, (b) $\Z$ does not respect $H$, or (c) $\Z$ does not generalize $\X$.
\end{definition}

\section{Downcoding attacks on syntactic privacy techniques}
\label{sec:downcoding}

We study a new class of attacks on hierarchical $k$-anonymity called \emph{downcoding attacks} and prove that all minimal hierarchical $k$-anonymizers are vulnerable to downcoding attacks.
Our downcoding attacks are powerful yet computationally straightforward.
The attacks apply as is to $\ell$-diversity, $t$-closeness, and the many QI-deidentification techniques in Section~\ref{sec:QI-survey}.
They demonstrate that even when every attribute is treated as a quasi-identifier, any privacy offered by QI-deidentification depends on unstated distributional assumptions about the dataset.

\subsection{Overview}

In short, downcoding undoes hierarchical generalization.
A downcoding attack takes as input a dataset generalized and recovers some fraction of the generalized data.
We call this downcoding as it corresponds to recoding records down a generalization hierarchy.
Our downcoding attacks are powered by a simple observation: \emph{minimality leaks information}.
Figures~\ref{fig:downcoding-example} and~\ref{fig:minimality} give simple examples of downcoding and of leakage from minimality, respectively.

	We prove that there exist data distributions and hierarchies such that every minimal hierarchical $k$-anonymizer is vulnerable to downcoding attacks.
		Hence any privacy provided by QI-deidentification is subject to distributional assumptions.

		The downcoding attack adversary $\A$ gets as input a QI-deidentified dataset $\Y$ which is the output of an unknown mechanism $M$ on an unknown dataset $\X$. $\A$ also knows anything published with $\Y$, namely $N$, $k$, and the hierarchy $H$. (Without $H$ data users would be unable to interpret $\Y$.) Finally we also allow the adversary to depend on the data distribution $U$. One interpretation is that the security that a mechanism affords against downcoding attacks depends on limiting the attacker's knowledge, which is not good security practice. Moreover, in many settings $U$ can be efficiently learned from an independent sample $\X'$.

	  Formally, we construct a distribution $U$ over $\omega(\log n)$ attributes and a generalization hierarchy $H$ such that every minimal hierarchical algorithm enables downcoding attacks on datasets drawn i.i.d.\ from $U$. Our first attack uses a natural data distribution (i.e., clustered heteroskedastic data in Section~\ref{sec:downcoding:example}) and a tree-based hierarchy, and allows an attacker to completely recover a constant fraction of the deidentified records with high probability. Our second attack uses a less natural data distribution and hierarchy, and allows an attacker to recover $3/8$ths of every record with 99\% probability.

		Even with the assumptions on $M$ and the knowledge of $\A$, our attacks are far more general that typical attacks against QI-deidentification.
		For example, the attacks that motivated $t$-closeness as a refinement of $\ell$-diversity don't even apply to a single well-defined mechanism~\cite{t-closeness}.
		They show only that it is possible for a mechanism to produce $\ell$-diverse outputs that are vulnerable. In contrast, we show attacks on a large and well-defined class of mechanisms. Moreover, our attacks work against all QI-deidentification definitions simultaneously, not any one alone.

\subsection{Definition}
Let $\Y$ be a $k$-anonymous version of a dataset $\X$ with respect to generalization hierarchy $H$.
A downcoding attack takes $\Y$ as input and outputs a strict refinement $\Z$ of $\Y$ that simultaneously respects $H$ and generalizes $\X$.

\begin{definition}[Downcoding attack]
	Let $\Y$ be a hierarchical $k$-anonymous generalization of a (secret) dataset $\X$ with respect to some hierarchy $H$.
	$\Z$ is a \emph{downcoding} of $\Y$ if $\X\preceq \Z$, $\Z\prec \Y$, and $\Z\in H$.
\end{definition}

\begin{observation}\label{obs:downcoding-k-anon}
	If $\Y$ is minimal and $\Z$ is a downcoding of $\Y$, then $\Z$ violates $k$-anonymity.
\end{observation}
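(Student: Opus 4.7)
The plan is to observe that this is essentially an immediate consequence of the definition of hierarchical minimality once we unpack what it means for $\Z$ to be a downcoding of $\Y$.

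First, I would restate the three properties guaranteed by $\Z$ being a downcoding: (i) $\X \preceq \Z$, so $\Z$ generalizes $\X$; (ii) $\Z \prec \Y$, so $\Z$ is a strict refinement of $\Y$; and (iii) $\Z \in H^D$, so $\Z$ respects the hierarchy $H$. Next, I would invoke the definition of minimality applied to the mechanism that produced $\Y$: because $\Y$ is minimal, every strict refinement $\Z \prec \Y$ must fall into one of the three categories (a) $\Z$ is not $k$-anonymous, (b) $\Z$ does not respect $H$, or (c) $\Z$ does not generalize $\X$.

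Properties (i) and (iii) of the downcoding immediately rule out cases (c) and (b) respectively. Since the three cases are exhaustive, case (a) must hold, i.e., $\Z$ violates $k$-anonymity. This concludes the proof.

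There is no real obstacle here; the content of the observation is entirely in how the definitions of \emph{downcoding} and \emph{minimality} are set up to align. The only thing to watch is that the definition of minimality is stated for a fixed mechanism $M$ and its output $\Y$ on some input $\X$, so one must be careful to note that the minimality clause quantifies over \emph{all} strict refinements $\Z \prec \Y$ with no restriction on how $\Z$ arises, and hence in particular applies to the candidate downcoding. No additional lemmas or nontrivial computations are needed.
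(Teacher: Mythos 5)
Your proof is correct and matches the paper's (implicit) justification: the paper states this as an observation without proof precisely because it follows by unpacking the definitions exactly as you do, with the three downcoding conditions eliminating cases (b) and (c) of minimality and forcing case (a). Your closing remark that minimality quantifies over all strict refinements $\Z \prec \Y$, regardless of how $\Z$ is produced, is the right point to be careful about and is handled correctly.
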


We consider three measures of an attack's strength: How \emph{many} records are refined? How \emph{much} are records refined? How \emph{often} records refined?
Recall that if $\Z \prec \Y$, then $\z_n \subseteq \y_{n}$ for all $n$ and $\z_n \subsetneq \y_{n}$ for at least one $n$.

\begin{itemize}[leftmargin=*,itemindent=-10pt,label={}]
\item $\Delta_N$: How \emph{many} records are refined? For $\Delta_N \in \bbN$, we write $\Z \prec_{\Delta_N} \Y$ if there exist at least $\Delta_N$ distinct $n$ for which $\z_n \subsetneq \y_{n}$.
That is, $\Z$ strictly refines at least $\Delta_N$ records in $\Y$. An attacker prefers larger $\Delta_N$.

\item $\Delta_D$: How \emph{much} are the records refined? For $\Delta_D \in \bbN$, we write $\z \subsetneq_{\Delta_D} \y$ if there exist at least $\Delta_D$ distinct $d$ for which $z_d \subsetneq y_d$.
We write $\Z \prec_{\Delta_D} \Y$ if $\z_n\subsetneq\y_n \implies \z_n\subsetneq_{\Delta_D} \y_n.$
That is, either $\z_n=\y_{n}$ or it $\z_n$ strictly refines $\y_{n}$ along at least $\Delta_D$ dimensions.
	An attacker prefers larger $\Delta_D$.

 \item $\daleth$:\footnote{$\daleth$ is pronounced ``dah-let'' and is the fourth letter of the Hebrew alphabet.}
 How \emph{often} are records refined? Consider the probability experiment $\X \sim U^N$, $\Y\gets M(\X,H)$, and $\Z \gets \A(\Y)$ where $U$ is a distribution over data records, $M$ is a $k$-anonymizer, and $\A$ is a downcoding adversary.
	$\daleth(\Delta_N,\Delta_D) \in [0,1]$ is the probability that $\Z$ downcodes with parameters at least $\Delta_N$ and $\Delta_D$.
	For any fixed $\Delta_N$ and $\Delta_D$, an attacker prefers larger $\daleth$.
\end{itemize}

\subsection{Minimal $k$-anonymizers enable downcoding attacks}

Downcoding may seem impossible: How can one strictly refine $\Y$ using only the information contained in $\Y$ itself? Our attacks leverage \emph{minimality}. The mere fact that $\Y$ is a minimal hierarchical generalization of $\X$ reveals more information about $\X$ that we use for strong downcoding attacks. See Figure~\ref{fig:minimality} for a simple example.

A general-purpose hierarchical $k$-anonymizer $M$ works for every generalization hierarchy $H$. Our theorems state that there exist data distributions $U$ and corresponding hierarchies $H$ such that {every} minimal hierarchical $k$-anonymizer $M$ is vulnerable to downcoding. By Observation~\ref{obs:downcoding-k-anon}, these attacks defeat the $k$-anonymity of $M$.

\begin{theorem}\label{thm:weaker-downcoding}
For all $k \ge 2$, $D = \omega(\log N)$, there exists a distribution $U$ over $\mathbb{R}^D$, and a generalization hierarchy $H$ such that all minimal hierarchical $k$-anonymizers $M$ enable downcoding attacks with $\Delta_N = \Omega(N)$, $\Delta_D = 3D/8$, and $\daleth(\Omega(N),3D/8) > 1-\negl(N)$.
\end{theorem}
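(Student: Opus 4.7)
The plan is to exhibit an explicit distribution $U$ and hierarchy $H$ --- the clustered heteroskedastic construction foreshadowed in Section~\ref{sec:downcoding:example} --- together with a single adversary $\A$ who downcodes the output of every minimal hierarchical $k$-anonymizer $M$ applied to $\X \sim U^N$. The argument breaks into a concentration step (i.i.d.\ draws from $U$ land in a rigid cluster shape with probability $1-\negl(N)$), a structural step (every minimal $M$ must respect that shape, producing $\Y$ in a canonical form), and a leakage step (minimality together with $\A$'s knowledge of $U$ and $H$ determines enough of $\X$ to refine $3D/8$ coordinates of a constant fraction of records).

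For the construction I would partition the $D$ coordinates into an ``identity'' block of size $5D/8$ and a ``variable'' block of size $3D/8$. A sample from $U$ picks a cluster index $c \in [T]$ with $T = \Theta(N/k)$, copies a fixed cluster center $\mu_c$ on the identity block, and draws the variable block from a per-cluster conditional distribution whose interaction with $H$ is co-engineered so that every within-cluster multiset on a variable coordinate forces the minimal hierarchy generalization up to a designated internal node $\nu$ of $H$. A Chernoff plus union bound, enabled by $D = \omega(\log N)$, then gives $1-\negl(N)$ probability that every cluster has size in $\Theta(k)$, that identity blocks of distinct clusters disagree on $\omega(\log N)$ coordinates (so no $M$ ever mixes clusters without blowing up generalization beyond the minimum), and that the variable blocks within every cluster attain the spread triggering the forced generalization to $\nu$. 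Together these imply the canonical output: every minimal $M$ groups records exactly by cluster and generalizes each variable coordinate to $\nu$ (or an ancestor), while leaving identity coordinates as singletons.

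The adversary $\A$ then identifies each cluster from the (unchanged) identity block and, using minimality together with the known within-cluster conditional distribution, selects inside each class a subset of records whose variable-block entries can be pinned down and replaces each selected $\y_{n,d}$ by a strictly smaller hierarchy node $\z_{n,d} \subsetneq \y_{n,d}$ that is guaranteed to contain $\x_{n,d}$. A counting argument --- a constant fraction of clusters are ``good,'' and inside each good cluster a constant fraction of records can be refined on all $3D/8$ variable coordinates --- then gives $\Delta_N = \Omega(N)$ and $\Delta_D = 3D/8$, with the overall failure probability controlled by the concentration bounds from the first step.

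The main obstacle is the leakage step, because of a subtle circularity: any hierarchy node $\nu' \subsetneq \nu$ that \emph{always} contained the full within-class multiset would have been available to $M$ as a $k$-anonymous generalization, contradicting $M$'s choice of $\nu$, and so would never actually arise in $\Y$. Thus $\A$ cannot uniformly refine every class to a common smaller node; instead, the construction of $U$ and $H$ must be such that the very \emph{fact} that $M$ was forced to $\nu$ imposes a combinatorial constraint on the class's variable multiset which, combined with the known conditional distribution, uniquely determines some but not all of the records' variable coordinates. Engineering this constraint so that it identifies an $\Omega(1)$ fraction of records per class and a $3/8$ fraction of coordinates per identified record, simultaneously for every minimal $M$ and with overall failure probability $\negl(N)$, is the technical core of the proof and is what drives the specific constants in the theorem statement.
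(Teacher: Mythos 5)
Your high-level skeleton (concentration step, structural step, leakage step) matches the paper's proof, and your concentration and structural steps are essentially Claim~4.4 and the ``$\X$-good / $\hat{\Y}$-good'' analysis in Appendix~\ref{app:proofs}. But the proposal has a genuine gap, and you name it yourself: the leakage step. You correctly diagnose the circularity --- minimality means no uniformly smaller node $\nu'$ can contain the whole equivalence class, so the adversary cannot refine all $k$ records to a common child of $\nu$ --- and then you stop, declaring that engineering a construction that escapes this circularity ``is the technical core of the proof.'' That is precisely the part a proof must supply. Without it you have an attack template, not an attack: nothing in your construction tells the adversary \emph{how} to assign different records of the same equivalence class to different children of $\nu$, which is unavoidable since the downcoded $\Z$ must violate $k$-anonymity (Observation~\ref{obs:downcoding-k-anon}).

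The paper's resolution is the heteroskedastic one-outlier trick, and it is worth seeing why it closes exactly the hole you left open. There is no identity/variable block split; all $D$ coordinates are symmetric. Each cluster $H_t$ splits into $H_{t,\sm}$ and $H_{t,\bg}$, and in a good cluster exactly one of the $k$ records has large noise, so each of its coordinates lands in $H_{t,\bg}$ independently with probability $1/2$, while the other $k-1$ records lie entirely in $H_{t,\sm}^D$. Correctness forces $y_d \supseteq H_{t,\bg}$ (hence $y_d = H_t$) wherever the outlier is big, and minimality forces the converse, so the published pattern $\bg_t = \{d : y^t_d = H_t\}$ \emph{is} the outlier's big-coordinate set. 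The adversary then outputs $k-1$ copies of $H_{t,\sm}^D$ and one record equal to $H_{t,\bg}$ on $\bg_t$ and $H_{t,\sm}$ elsewhere --- an asymmetric, non-$k$-anonymous refinement that still generalizes $\X$. The constant $3D/8$ is not a designed block size but the Chernoff tolerance $|B_t - D/2| \le D/8$ on a $\mathrm{Bin}(D,1/2)$ count, which is also what makes every record in a good cluster (not just a constant fraction, as you claim) refine on at least $3D/8$ coordinates. If you want to salvage your block-based construction, you would still need to import some mechanism of this kind --- a per-cluster asymmetry whose footprint is forced into $\Y$ by correctness on one side and minimality on the other --- inside your variable block.
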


\begin{theorem}\label{thm:full-downcoding}
	For all constants $k \ge 2$, $\alpha > 0$, $D = \omega(\log N)$, and $T = \lceil N^2/\alpha \rceil$, there exists a distribution $U$ over $\Att^D = [0,T]^D$, and a generalization hierarchy $H$
	such that all minimal hierarchical $k$-anonymizers $M$
	enable downcoding attacks with $\Delta_N = N$, $\Delta_D = D$, and $\daleth(N,D) > 1-\alpha$.
	The attack also works for $k=N$ and $D = \omega(N \log N)$.
\end{theorem}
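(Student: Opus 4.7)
The plan is to design a distribution $U$ over $[0,T]^D$ and hierarchy $H$ over $[0,T]$ so that, with probability at least $1-\alpha$, any minimal hierarchical $k$-anonymizer $M$ outputs a $\Y$ whose equivalence classes and per-coordinate generalizations force the adversary's decoding. Following the paper's overview of the first attack as ``clustered heteroskedastic data,'' I take $U$ to be concentrated on a family of $N/k$ well-separated ``cluster'' subsets of $[0,T]^D$: each cluster consists of $k$ members of the form $\mathbf{c}+\epsilon\,\mathbf{1}_{I_i}$ for a shared center $\mathbf{c}$, a small perturbation scale $\epsilon$, and a prearranged family of subsets $I_1,\ldots,I_k\subseteq[D]$ that jointly cover every coordinate. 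The hierarchy $H$ is a binary tree on $[0,T]$ refined enough to distinguish $\epsilon$ from the inter-cluster separation. The choice $T=\lceil N^2/\alpha\rceil$ gives via the birthday bound that the sampled cluster centers are pairwise distinct with probability $\ge 1-\alpha/2$, and a concentration argument (or a rejection step baked into $U$) ensures each cluster receives exactly $k$ samples with probability $\ge 1-\alpha/2$, so the overall good event has probability $\ge 1-\alpha$.

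Under the good event I argue that any minimal $k$-anonymizer must partition records along the cluster structure. A cross-cluster equivalence class $C$ forces generalizations in every coordinate to span two cluster centers, whereas the within-cluster partition enjoys the much tighter generalization ``smallest $H$-ancestor of $\{c_d,c_d+\epsilon\}$'' in each perturbed coordinate and $\{c_d\}$ in each unperturbed one; hence the within-cluster partition strictly refines any cross-cluster $\Y$ in the $\preceq$ order, contradicting minimality. Once $\Y$'s partition matches the clusters, the per-class per-coordinate generalization, combined with the known $\{I_i\}$, reveals which subset of records is perturbed in each coordinate. The adversary $\A$ reconstructs each cluster center from $\Y$'s generalizations, uses $\{I_i\}$ to read off which record is perturbed where, and outputs $\Z$ whose rows are singletons of the inferred values. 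Every non-singleton $y_{n,d}$ is the $H$-ancestor of two distinct values, so $\Z$ strictly refines $\Y$ in every entry, yielding $\Delta_N=N$ and $\Delta_D=D$; residual within-class permutation freedom is absorbed by the footnote-version of $\preceq$ from Section~\ref{sec:prelims:notation}.

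The main obstacle is the rigidity claim above: that every minimal $M$ is forced to align its partition with the underlying clusters rather than mix records across clusters. The argument hinges on the $D=\omega(\log N)$ coordinates, which ensure that any cross-cluster mixing incurs a strict generalization blow-up in a large fraction of coordinates simultaneously; a subtle point is that $M$ might try to absorb some blow-up by trading across classes (e.g., merging two half-clusters with two others), and this must be ruled out by a careful inductive argument on class size together with the cluster-center separation. The $k=N$ regime uses a single cluster of size $N$ with $D=\omega(N\log N)$ coordinates: the extra dimensions provide $\Omega(\log N!)$ ``signature bits'' so that each record's perturbation pattern $I_n$ is uniquely determined by $\Y$'s per-coordinate generalizations, and the rest of the argument goes through against the single-class constraint.
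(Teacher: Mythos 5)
Your proposal has genuine gaps, and the most serious one is structural: the clusters-of-$k$ design cannot deliver $\Delta_N = N$ and $\daleth(N,D) > 1-\alpha$ under the i.i.d.\ experiment $\X \sim U^N$. If records independently choose one of $N/k$ clusters, the event that \emph{every} cluster receives exactly $k$ samples has exponentially small probability (exact multinomial occupancy), and there is no ``rejection step baked into $U$'' available, since $U$ is a marginal over single records and the theorem's experiment draws $N$ independent copies. The same issue recurs inside each cluster: for your adversary to decode, the $k$ records of a cluster must realize the $k$ prearranged patterns $I_1,\dots,I_k$ exactly once each, another exact-occupancy event that fails with constant probability. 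This is precisely why the paper's Theorem~\ref{thm:weaker-downcoding} (which does use clustered data) only attacks the $\Omega(N)$ clusters that happen to have size exactly $k$ and settles for $\Delta_N = \Omega(N)$. To get $\Delta_N = N$, the paper abandons clusters of size $k$ entirely: each record independently draws its own tag $t(\x) \in [T]$ (distinct across records with probability $\ge 1-\alpha/2$ by the birthday bound, which is the only role of $T = \lceil N^2/\alpha\rceil$) and has sparse support $\x \in \{0, t(\x)\}^D$.

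Two further steps would fail as written. First, your rigidity claim --- that minimality forces the anonymizer's partition to align with the clusters --- is not established and is doubtful under the paper's \emph{local} definition of minimality: a refinement $\Z \prec \Y$ must satisfy $\z_n \subseteq \y_n$ row by row, so a mixed equivalence class (one record from cluster $A$ with $k-1$ from cluster $B$) typically cannot be refined without dropping that class below size $k$, and hence can be minimal. The paper never proves partition rigidity; instead, Claim~\ref{claim:minimal-M-structural} shows that for the chain hierarchy of Figure~\ref{fig:main-thm-hierarchy}, correctness plus minimality force $y_d = [0,\max_{\x \in \X_\y} x_d]$ for \emph{every} equivalence class $\X_\y$, however the anonymizer chose to group records; collision-freeness then lets the adversary attribute each coordinate with $y_d = [0,t]$ to the unique record tagged $t$. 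Second, your decoding step is information-theoretically stuck even granting rigidity: if the patterns $I_1,\dots,I_k$ jointly cover every coordinate, then every coordinate of the cluster's common generalized record is the hull of $\{c_d, c_d+\epsilon\}$, so $\Y$ carries no information about which record is perturbed where. The paper's designs avoid this by ensuring each nonzero/large coordinate of a class is contributed by a (w.h.p.) \emph{unique} record, which is what makes the per-coordinate generalizations decodable.
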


Each of the theorems has some advantages over the other. The attacker in Theorem~\ref{thm:full-downcoding} manages to recover {every} attribute of {every} record $\x\in\X$ except with probability $\alpha$. However the parameters of the construction depend polynomially on $1/\alpha$.
Theorem~\ref{thm:weaker-downcoding} removes this dependency, at the expense of attacking only a constant fraction of records and attributes---still a serious failure of $k$-anonymity.
The more significant advantage of Theorem~\ref{thm:weaker-downcoding} is that the data distribution and generalization hierarchy are both very natural (Example~\ref{ex:clustered-gaussians}). In contrast, the distribution and hierarchy in the proof of Theorem~\ref{thm:full-downcoding} are more contrived.

Full proofs of both Theorems~\ref{thm:weaker-downcoding} and~\ref{thm:full-downcoding} are in Appendix~\ref{app:proofs}.
Both proofs follow the same structure at a very high level. We prove a structural result on minimal, hierarchical $k$-anonymous mechanisms for a specially constructed hierarchy $H$ (Claims~\ref{claim:minimality-clustered} and \ref{claim:minimal-M-structural}). This structural result states that if $\X$ satisfies certain conditions then $\Y$ must take a restricted form which allows the downcoding adversary to construct $\Z$. To prove the theorem, we construct a data distribution $U$ such that random $\X\sim U^N$ will satisfy the conditions of the structural result with probability close to 1.

\subsection{Example: Clustered Gaussians}
\label{sec:downcoding:example}
The proof of Theorem~\ref{thm:weaker-downcoding} shows that distributions satisfying certain properties are vulnerable to downcoding attacks. Example~\ref{ex:clustered-gaussians} describes a family of clustered Gaussian distributions that satisfy those properties. Here we give an instantiation of this family of distributions for $k=10$ and describe the corresponding hierarchy and downcoding adversary.

We sample $N=100$ records $\x$ i.i.d.\ as follows. Pick $\sz = \bg$ with probability $1/10$, and $\sz = \sm$ otherwise. Pick a cluster $t \in \{1,\dots,10\}$ uniformly at random. Sample each attribute of $\x$ i.i.d.\ from the cluster centered at $c_t = 130t$ depending on $\sz$:
 If $\sz = \sm$ sample from $\mathrm{N}(c_t, 1)$ distribution. If $\sz = \bg$ sample from the $\mathrm{N}(c_t,100)$.

The hierarchy $H$ consists of the interval $[A_1, A_{11})$ subdivided into intervals $[A_t, A_{t+1})$. As depicted in Figure~\ref{fig:gaussian}, each $[A_t, A_{t+1})$ is further subdivided into $[B_{t},D_{t})$ and its complement $[A_{t},B_{t}) \cup [D_{t},A_{t+1})$.
The key property is that half of the mass of $\mathrm{N}(0,100)$ lies in the corresponding interval $[B_t, D_t)$. For the above parameters: $A_t = c_t - 65$, $B_t = c_t - 6.6$, and $D_t = c_t + 6.6$.

The adversary $\A$ is described in Algorithm~\ref{alg:weak-downcoding-example}. It takes as input $\Y$, $k$, and a description of $H$. It looks at each group of generalized records $\hat{\Y}_t$ of the output. If the number of records in $\hat{\Y}_t$ is not $k$, then the whole group of records is copied to the output $\Z$ unchanged (i.e., no downcoding on these records).
If $\hat{\Y}_t$ has exactly $k$ records, then by $k$-anonymity these records are all identical copies of some $\y^t$. Some of $\y^t$'s entries may be aggregated to $[A_t, A_{t+1})$. If it's many more or many less than half the entries, then the whole group of records is copied to the output $\Z$ unchanged (i.e., no downcoding on these records). Otherwise, the $k$ records in $\hat{\Y}_t$ all get downcoded as described in the algorithm.

It follows from Example~\ref{ex:clustered-gaussians} that for $k=10$, the distribution described above, and $\Y$ produced by any minimal hierarchical $k$-anonymizer, $\A$ will downcode a constant fraction of the records in $\Y$ (with constant probability).

\begin{figure*}
  \setlength\abovecaptionskip{0.3em}
	\includegraphics[width=\textwidth]{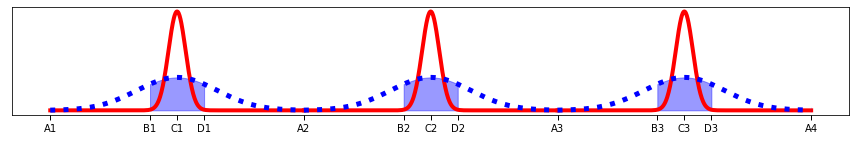}
	\caption{\label{fig:gaussian}The marginal distribution of each attribute for the example described in Section~\ref{sec:downcoding:example} (depicting 3 of 10 clusters, not to scale). A key property is that half of the mass of the $t^{\text{th}}$ blue dotted distribution lies in the interval $[B_t, D_t)$.}
\end{figure*}

\medskip
\begin{algorithm}
	\caption{Adversary $\A$ for the example in Section~\ref{sec:downcoding:example} (see also Fig.~\ref{fig:gaussian}).}
	\label{alg:weak-downcoding-example}
 \KwData{$\Y$, $k$}
 \KwResult{$\Z$}
 \For{cluster $t = 1,\dots,T$}{
	Let $\hat{\Y}_t$ be the records with an entry in $[A_t, A_{t+1})$\;
	\If{$|\hat{\Y}_t| \neq k$}{
		Copy every $\y \in \hat{\Y}_t$ into $\Z$\;
		continue\;
	}
	\BlankLine
	\tcc{$\hat{\Y}_t$ is $k$ exact copies of some $\y^t$}
	$\bg_t \gets \{d: y^t_d = [A_t, A_{t+1}]\}$\;
	$b_t \gets |\bg_t|$\;
	\eIf{$|b_t - D/2| > D/8$}{
		Write $k$ copies of $\y^t$ to $\Z$\;
	}{
		Write $k-1$ copies of $[B_t,D_t]$ to $\Z$\;
		Write $\z^t$ to $\Z$, where
		\vspace{-0.5em}
		\begin{equation*}
			z_d^t =
			\begin{cases}
			[B_t,D_t) & d \not\in\bg_t \\
			[A_t,B_t) \cup [D_i,A_{i+1}) & d \in \bg_t
			\end{cases}
		\end{equation*}\vspace{-1em}}}
\end{algorithm}

\ifUSENIX
\else

\subsection{Proof of Theorem~\ref{thm:weaker-downcoding}}

\begin{figure}[ht]
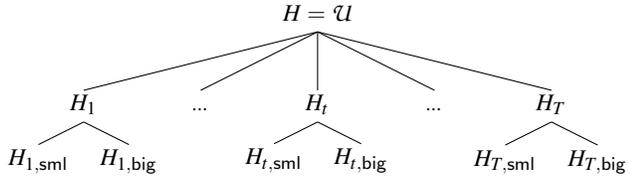

{\small
	\qtreecentertrue
	\Tree[
	.$H=\Att$
	    [.$H_1$
	        {$H_{1,\sm}$}
	        {$H_{1,\bg}$}
	    ]
			{\tiny \bf \ldots}
			[.$H_t$
	        {$H_{t,\sm}$}
	        {$H_{t,\bg}$}
	    ]
			{\tiny \bf \ldots}
	    [.$H_T$
	        {$H_{T,\sm}$}
	        {$H_{T,\bg}$}
	    ]
	]
}
\caption{\label{hierarchy:natural}The generalization hierarchy used in the proof of Theorem~\ref{thm:weaker-downcoding}. The attribute domain is an interval in $\bbR$, as are each $H_t$ and $H_{t,\sm}$. Each set $H_{t,\bg} = H_t \setminus H_{t,\sm}$ is the union of two intervals}
\end{figure}

\begin{claim}\label{claim:minimality-clustered}
	Let $H$ be a hierarchy with $T$ nodes at the second level: $H_1,\dots, H_t$ (as in Figure~\ref{hierarchy:natural}).
	Let $\X\in \Att^D$ be a dataset, $M$ be a minimal hierarchical $k$-anonymizer, and $\Y \gets M(\X, H)$.
 	For $t\in[1,T]$, let $\X_{t} = \X \cap H_t^D$ and let  $\Y_{t}$ be the records in $\Y$ corresponding to the records in $\X_{t}$.
	If $\X = \cup_{t} \X_t$, then for all but at most one $t\in \{t:|\X_t|=k\}$,\ $\y \subset H_t^D$ for all $\y \in \Y_t$.
\end{claim}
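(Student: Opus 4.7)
The plan is to classify the equivalence classes of $\Y$ by how they interact with the partition $\{\Y_t\}_t$, then use minimality to constrain each type. Call an equivalence class $C$ of $\Y$ \emph{contained} if $C \subset \Y_t$ for some $t$, and \emph{spanning} otherwise. A first key observation is that the hierarchy $H$ (as drawn in Figure~\ref{hierarchy:natural}) has the property that every non-root label is contained in exactly one $H_t$. Hence if $C$ is a spanning class with common value $\y^*$, every coordinate $y^*_d$ must equal $\Att$: any non-$\Att$ label $y^*_d$ lies entirely in some $H_{t^*}$, which together with the hypothesis $\X = \cup_t \X_t$ and the disjointness of the $H_t$'s forces every $\x$ in $C$ to lie in $\X_{t^*}$, making $C$ contained.

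Second, I would note that for any $t$ with $|\X_t| = k$, the $k$ records of $\Y_t$ either (a) all form a single contained equivalence class of size exactly $k$, or (b) all lie in spanning equivalence classes, since any contained class meeting $\Y_t$ has size at least $k$ and $|\Y_t|=k$. In case (a), minimality already yields $\y \subset H_t^D$ for every $\y \in \Y_t$: if some coordinate $y_d$ equaled $\Att$, I could refine all $k$ identical copies of $\y$ to have $H_t$ in coordinate $d$, which still generalizes $\X_t \subset H_t^D$, still respects $H$, preserves the equivalence class intact, and leaves every other class untouched, strictly refining $\Y$ in contradiction with minimality.

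Third, I would show that case (b) can occur for at most one value of $t \in \{t:|\X_t|=k\}$. Suppose for contradiction two such indices $t_1 \ne t_2$ both fall into case (b). By the first observation, every one of the $2k$ records in $\Y_{t_1} \cup \Y_{t_2}$ equals $\Att^D$, so they all belong to a single equivalence class $C^*$ of size at least $2k$. Now define $\Z$ by setting $\z_n = H_{t_1}^D$ for every $n$ with $\x_n \in \X_{t_1}$ and $\z_n = \y_n$ otherwise. Then $\Z$ respects $H$, generalizes $\X$, and satisfies $\Z \prec \Y$ (strictly, since $H_{t_1}^D \subsetneq \Att^D$). For $k$-anonymity, the $k$ refined records form a class of size at least $k$ (possibly merging with pre-existing $H_{t_1}^D$-records), the surviving portion of $C^*$ has size at least $2k - k = k$, and every other equivalence class is untouched. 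This contradicts the minimality of $\Y$, so at most one such $t$ can be bad, and for all other $t \in \{t:|\X_t|=k\}$ case (a) applies and $\y \subset H_t^D$ for all $\y \in \Y_t$.

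I expect the most delicate part to be the final reduction's bookkeeping: one has to carefully verify that the surgery does not silently shrink some third equivalence class below threshold. The argument hinges on the clean separation provided by the first step (spanning classes must equal $\Att^D$) so that refining $\Y_{t_1}$ to $H_{t_1}^D$ affects only $C^*$ and the new $H_{t_1}^D$-class, together with the counting $|C^*| \ge 2k$ which guarantees the leftover remains above $k$.
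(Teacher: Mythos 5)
Your proof is correct and follows essentially the same route as the paper's: the same two minimality surgeries drive the argument (refining a single $\Att$-coordinate of a size-$k$ cluster's common record down to $H_t$, and collapsing one of two fully-generalized clusters to $H_t^D$), with your contained/spanning classification of equivalence classes being a repackaging of the paper's record-level dichotomy that every $\y$ is either $H^D$ or contained in some $H_t^D$. The bookkeeping you flag as delicate is resolved the same way in the paper, by noting that the surviving $\Att^D$-class still contains the $k$ untouched records of the other cluster.
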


Note that as defined, the generalized records in $\Y_t$ are not necessarily contained in $H^D_t$. The claim says that if $\X$ consists of data in the $T$ clusters $\X_1,\dots,\X_T$, then the records in $\Y_t$ will be contained in $H^D_t$ for almost all clusters of size exactly $k$.

\begin{proof}[Proof of Claim~\ref{claim:minimality-clustered}]

	First, we show that for any $\y = (y_1,\allowbreak\dots,\allowbreak y_D)$, a single coordinate of $\y$ is generalized to $H = \Att$ if and only if \emph{every} coordinate in $\y$ is generalized to $H$. Namely, if $y_d = H$ for some $d$, then $\y = H^D$.

	Suppose for contradiction that there exists $\y = (y_1,\dots, y_D)$ corresponding to $\x\in \X$ such that $y_1 = H$ but $y_2 \subset H_t$ for some $t$.
	By the assumption on $\X$, there exists $t'$ such that $\y \in \Y_{t'}$. Because $y_2 \subset H_t$, $t'=t$ and hence $\y \in \Y_t$.
	By $k$-anonymity, there are at least $k-1$ additional records $\y'\in \Y$ such that $\y' = \y$. Repeating the previous argument, $\y' \in \Y_t$.

	Let $\y^* = (H_t, y_2,\dots, y_D)\subsetneq \y$.
	Construct $\Y^*$ by replacing all copies of $\y$ in $\Y$ with $\y^*$. It is immediate that $\Y^*$ is $k$-anonymous and respects the hierarchy. By the assumption that $y_1 = H$, $\Y^*$ strictly refines $\Y$. Additionally, $\Y^*$ generalizes $\X$, because all altered rows were in $\Y_t$.
	This contradicts the minimality of the $k$-anonymizer $M$. Therefore we have proved that if $y_d = H$ for some $d$, then $y_{d} = H$ for all $d$.

	Next, we show that for all but at most one $t\in \{t:|\X_t| =k\}$, there exists $\y\in\Y_t$ such that $\y\subset H^D_t$.
	By the preceding argument, it suffices to show that $\y \neq H^D$. Suppose for contradiction there exists  $t\neq t'$ such that for all $\y \in \Y_t \cup \Y_{t'}$, $\y = H^D$.
	Construct $\Y'$ by replacing each $\y\in \Y_{t'}$ with $H_{t'}^D\subsetneq \y$. It is easy to see that $\Y'$ respects the hierarchy, satisfies $k$-anonymity, generalizes $\X$, and strictly refines $\Y$. This contradicts the minimality of the $k$-anonymizer $M$.

	To complete the proof, let $t\in \{t:|\X_t| = k\}$ and suppose there exists $\y \in \Y_t$ such that $\y\subset H^D_t$. By $k$-anonymity, there must be at least $k-1$ distinct $\y' = \y \subset H^D_t$. By assumption on $\X$, each such $\y'$ must be an element of $\Y_t$.
	Because $|\Y_t| = |\X_t| = k$, every element of $\Y_t$ is equal to $\y\subset H^D_t$.
\end{proof}

\begin{proof}[Proof of Theorem~\ref{thm:weaker-downcoding}]
	\textbf{Data distribution}\quad
	Records $\x\sim U$ are noisy versions of one of $T = N/k$ cluster centers $\c_t \in \bbR^D$. Each coordinate $x_d$ of $\x$ is $c_{t,d}$ masked with i.i.d.\ noise with variance $\sigma^2$. The variance is usually \emph{small}, but is \emph{large} with probability $1/k$ (variances $\sigma^2_\sm \ll \sigma^2_\bg$).
	The generalization hierarchy $H$ is shown in Figure~\ref{hierarchy:natural}. $H$ divides the attribute domain $\Att$ into $T$ components $H_t$, each of which is further divided into {small values} $H_{t,\sm}$ and {large values} $H_{t,\bg}$.

	We set the parameters so that w.h.p.\ all of the following hold.
	First, the data is clustered: $\Pr_{\x}[\exists t,\ \x \in H_t^D] > 1-\negl(N)$.
	Second, every coordinate $x_d$ of a small-noise (variance $\sigma^2_\sm$) record is small: $x_d \in H_{t,\sm}$. Third, the coordinates of large-noise (variance $\sigma^2_\bg$) records are large or small ($x_d \in H_{t,\bg}$ or $x_d \in H_{t,\sm}$, respectively) with probability $1/2$ independent of all other coordinates . In particular, if $\x$ is generated using large noise then $\x \not\in H_{t,\sm}^D$ with high probability.
	An example of a distribution $U$ and hierarchy $H$ satisfying the above is given in Example~\ref{ex:clustered-gaussians}. In that example, the cluster centers $\c_t$ are masked with i.i.d.\ Gaussian noise.

\paragraph{The adversary}
	The adversary $\A$ takes as input $\Y$ and produces the output $\Z$ as follows. For $t \in [T]$, let $\hat{\Y}_t = \Y \cap H_t^D$.
	If $|\hat{\Y}_t| \neq k$, copy every $\y\in \hat{\Y}_t$ into the output $\Z$.
	Otherwise $|\hat{\Y}_t| = k$.
	By $k$-anonymity $\hat{\Y}_t$ consists of $k$ copies of a single generalized record $\y^t$.
	Let $\bg_t = \{d: y^t_d = H_t\}$ be the large coordinates of $\y^t$, and let $B_t = |\bg_t|$ be the number of large coordinates.
	If $\left| B_t - D/2\right| > D/8$, then $\A$ writes $k$ copies of $\y^t$ to the output $\Z$.
	Otherwise $\A$ writes $k-1$ copies of $H_{t,\sm}^D$ and one copy of $\z^t = (z^t_1,\ldots, z^t_D)$ to the output, where
	\begin{equation}
		\label{eq:z-weak}
		z_d^t =
		\begin{cases}
		H_{t,\bg} & d \in \bg_t \\
		H_{t,\sm} & d \not\in\bg_t
		\end{cases}
	\end{equation}

\paragraph{Analysis}
	It is immediate from the construction that $\Z \preceq \Y$. Moreover, it is easy to arrange the records in $\Z$ so that $\z_n\subset \y_n$ for all $n \in [N]$. By construction, $\z_n \subsetneq \y_n$ implies that $\z_n$ differs from $\z_n$ differs from $\y_n$ on at at least $B_t \ge 3D/8$ coordinates.

	To prove the theorem, it remains to show that w.h.p.\
	$\X \preceq \Z$ and $\Z\prec_{\Omega(N)} \Y$.
	Let $\X_\bg = \X \setminus \left(\bigcup_t\  H^D_{t,\sm}\right)$ consist of all the records $\x$ that have at least one large coordinate (i.e., $x_d \in H_{t,\bg}$ for some $t,d$).

	For all $t\in [T]$, let $\X_t = \X \cap H_t^D$ and let $\hat{\X}_t \subseteq \X_t$ be the records $\x \in \X$ that correspond to the records in $\hat{\Y}_t$.
	(Whereas $\X_t$ consists of all records that are in cluster $t$, $\hat{\X}_t$ consists of only those records that correspond to generalized records $\y \in \hat{\Y}_t$ that can be easily inferred to be in cluster $t$ based on $\Y$.)
	A cluster $t$ is \emph{$\X$-good} if $|\X_t| = k$ and $|\X_t \cap \X_\bg| =1$. A cluster $t$ is \emph{$\hat{\Y}$-good} if $\hat{\Y}_t =k$ and $|B_t - D/2|\le D/8$.

	It suffices to show that:
	\begin{itemize}[leftmargin=*]
		\item $\Omega(N)$ clusters $t$ are $\X$-good.
		\item All but at most one $\X$-good clusters are $\hat{\Y}$-good.
		\item For all $\hat{\Y}$-good clusters $t$, $\hat{\X}_t \cap \X_\bg = \{\x^t\}$ and $\x^t \subset \z^t$.
	\end{itemize}
	\noindent
	Note that if $t$ is both $\X$-good and $\hat{\Y}$-good, then $\hat{\X}_t = \X_t$. But there may be $t$ that are $\hat{\Y}$-good but not $\X$-good.

	\paragraph{Many clusters are $\X$-good} We lower bound $\Pr[t\mbox{ $\X$-good}]$ by a constant and then apply McDiarmid's Inequality.
	$$\Pr[t\mbox{ $\X$-good}] = \Pr[|\X_t| = k] \cdot \Pr\bigl[|\X_t \cap \X_\bg| = 1 \bigmid |\X_t| = k\bigr].$$
	$|\X_t|$ is distributed according to $\Binom(N,k/N)$, which approaches $\Pois(k)$ as $N$ grows. Using the fact that $k! \le (k/e)^k e\sqrt{k}$ we get:
		$\Pr[|\X_t| = k] \approx (k^k e^{-k})/k! \ge 1/(e\sqrt{k}) = \Omega(1).$
	$\Pr[\x \in \X_\bg] = \frac{1}{k} \pm \negl(N)$.
	The events $\x \in \X_t$ and $\x \in \X_\bg$ are independent.
		Therefore $$\Pr\bigl[|\X_t \cap \X_\bg| = 1 \bigmid |\X_t| = k\bigr] = \left(1-1/k\right)^{k-1} \pm \negl(N) > 1/e.$$
	Combining the above, $\Pr[t\mbox{ $\X$-good}] =\Omega(1)$. Quantitatively, for $k\le 15$, $\Pr[t\mbox{$\X$-good}] \gtrsim 1/(e^2\sqrt{k}) > 1/30$.

	Let $g(\X)$ be the number of $\X$-good values of $t$. By the above, $\E[g(\X)] = \Omega(N)$. Changing a single record $\x$ can change the value of $g$ by at most 2. Applying McDiarmid's Inequality,
	$$\Pr\left[g(\X) < \frac{\E(g(\X))}{2}\right] \le \exp\left(-\frac{2\left(\frac{\E(g(\X))}{2}\right)^2}{4N}\right) < \negl(N).$$
	Thus there are $\Omega(N)$ $\X$-good values of $t$ with high probability.

	\paragraph{Most $\X$-good clusters are $\hat{\Y}$-good}
	Cluster $t$ is $\hat{\Y}$-good if $\hat{\Y}_t =k$ and $|B_t - D/2|\le D/8$.
	First we show that for all but one $\X$-good $t$, $|\hat{\Y}_t| =k$. Let $\Y_{t}\supseteq \hat{\Y}_t$ be the records in $\Y$ corresponding to the records in $\X_{t}$. (Whereas $\Y_t$ consists of all records that correspond to $\X_t$, $\hat{\Y}_t$ consists of only those records whose membership in $\Y_t$ can be easily inferred from $\Y$.)
	Observe that $|\X_t| = |\Y_t| \ge |\hat{\Y}_t|$.
	By construction, for all $\x\in\X$ there exists $t$ such that $\x \in \X_t$ with high probability (i.e., $\X =  \cup_t \X_t$).
	By Claim~\ref{claim:minimality-clustered}, for all but at most one $\X$-good $t$ and every $\y \in \Y_t$, $\y \subset H_t^D$. Thus $|\hat{\Y}_t| =\Y_t = k$.

	Finally we show that for all $\X$-good $t$ as guaranteed by Claim~\ref{claim:minimality-clustered}, $B_t \in (3D/8, 5D/8)$ with high probability.
	$\hat{\Y}_t$ consists of $k$ copies of the same generalized record $(y_1,\dots, y_d)$.
	Since $M$ is hierarchical, $y_d\in \{H_t,H_{t,\sm},H_{t, \bg}\}$.
	By the $\X$-goodness of $t$, $\X_t$ contains $1$ large-noise record $\x_\bg$ and $k-1$ small-noise records $\x'$
	By correctness of the $k$-anonymizer $M$, $x_{\bg,d} \in H_{t,\bg}$ $\implies y_d \supseteq H_{t,\bg}$.
	Minimality implies the converse: $y_d \supseteq H_{t,\bg}$ $\implies x_{\bg,d} \in H_{t,\bg}$.
	With high probability, $x'_d \in H_{t,\sm}$ $\implies y_d\supseteq H_{t,\sm}$.
	Putting it all together,
	$$ x_{\bg,t} \in H_{t,\bg}\iff y_d = H_t \iff d\in \bg_t.$$
	By construction of the data distribution $U$, $\Pr[d \in \bg_t] = 1/2$ independently for each $d \in [D]$. Applying Chernoff again,
	$\Pr[|B_t - D/2| \ge D/8]< 2e^{-\Omega(D)}<\negl(N)$.

	\paragraph{Analyzing $\hat{\Y}$-good clusters}
	By construction, $\bg_t = \{d : \exists \x \in \hat{\X}_t \cap \X_\bg \text{ st } x_d \in H_{t,\bg}\}$.
	Because $|\bg_t| > 0$, $|\hat{\X}_t \cap \X_\bg| > 1$.
	A simple Chernoff-then-union-bound argument shows that the probability that there exist distinct records  $\x,\x' \in \X_\bg$ such that $|\bg_t| = |\{d: x_d\in H_{t,\bg} \lor x'_d\in H_{t,\bg}\}| < 5D/8$ is negligible.
	Hence $\hat{\X}_t\cap\X_\bg$ is a singleton $\{\x^t\}$ with high probability.
	$\x^t \subseteq \z^t$ follows immediately from the construction.
\end{proof}

\begin{example}\label{ex:clustered-gaussians}
	The following distribution $U$ and hierarchy $H$ suffice for the proof of Theorem~\ref{thm:weaker-downcoding}.
	The distribution $U$ is defined by $T = N/k$ cluster centers $c_1,\dots,c_{T} \in \bbR$ and standard deviations $\sigma_\sm,\sigma_\bg \in \bbR$.
	A record $\x \in \bbR^D$ is sampled as follows. Sample a cluster center $t\gets[T]$ uniformly at random. Sample $\sz \gets \{\sm, \bg\}$ with $\Pr[\sz = \bg] = 1/k$.
	Sample noise $\e \gets \Norm(0,\sigma^2_\sz \vec{I})$.
	Output $\x = \c_t + \e$, where $\c_t = (c_t, \dots, c_t) \in \bbR^D$.

	The hierarchy $H$ consists of intervals $H_t = [c_t - \Delta,c_t + \Delta]$ centered at the cluster centers $c_t$, for some $\Delta$. The hierarchy further subdivides each $H_t$ into a smaller interval $H_{t,\sm} = [c_t - \tau, c_t + \tau)$, for some $\tau<\Delta$, and the complement $H_{t,\bg} = H_t \setminus H_{t,\sm}$.

	To suffice for our proof, we require that
	with high probability over $\x \sim U$ there exists $t\in [T]$ such that:
	(a) $\x \in H_t^D$; (b) if $\sz = \sm$, then $\x \in H_{t,\sm}^D$;
	(c) if $\sz = \bg$, then each coordinate $x_d$ is in $H_{t,\bg}$ with probability $1/2$ independent of all other coordinates $x_{d'}$.
	Many instantiations of the parameters would work, such as:
 	$\sigma_\sm = 1$, $\tau =\log N$, $\sigma_\bg = \zeta\log N$, $\Delta = \zeta \log^2 N$, and $c_t = 2t\Delta$, where $\zeta = \frac{1}{\sqrt{2}\cdot \mathrm{erf}^{-1}(1/2)} \approx 1.48$ and $\mathrm{erf}$ is the Gaussian error function.
\end{example}

\ifUSENIX
	\subsection{Proof of Theorem~\ref{thm:full-downcoding}}
	\label{app:proof-full-downcoding}
\else
	\subsection{Proof of Theorem~\ref{thm:full-downcoding}}
\fi

\begin{figure}
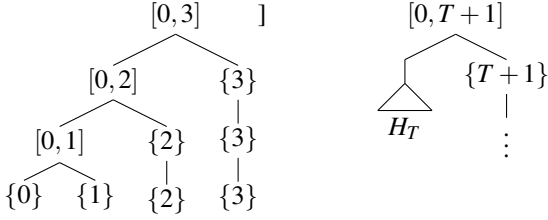

	\begin{tabular}{cc}
\qtreecentertrue
\Tree[
	.$[0,3]$
			[.$[0,2]$
	        [.$[0,1]$
						{$\{0\}$}
						{$\{1\}$}
					]
					[.$\{2\}$
						{$\{2\}$}
					]
	    ]
			[.$\{3\}$
				[.$\{3\}$
					{$\{3\}$}
				]
			]
	]
]
&
\quad
\quad
\quad
\qroofy=60
\qroofx=60
\Tree [.$[0,T+1]$ [ \qroof{$H_T$}. ] [.$\{T+1\}$
		[.$\vdots$
		]
] ]
\end{tabular}
\caption{\label{fig:main-thm-hierarchy}The generalization hierarchy used in the proof of Theorem~\ref{thm:full-downcoding}. On the left is the hierarchy $H_3$ for the domain $\Att = [03]$. On the right is a recursive construction of $H_{T+1}$ for domain $\Att = [0,T+1]$ from the hierarchy $H_T$.}
\end{figure}

A $k$-anonymizer $M:\X\mapsto\Y$ groups records $\x \in \X$ into equivalence classes such that if $\x$ and $\x'$ are in the same class, then $\Y(\x) = \Y(\x')$.
In general, $M$ may have a lot of freedom to group the $\x$'s the equivalence classes and also to choose the $\y$'s that generalize each equivalence class.

Claim~\ref{claim:minimal-M-structural} states that if $M$ is minimal and generalizes using hierarchy like in Figure~\ref{fig:main-thm-hierarchy}, then it has much less freedom. Namely, $\Y$ is fully determined by the choice of equivalence classes (with probability at least $1-\alpha$ over the dataset $\X$). $M$ can group the $\x$'s together, but then has no control over the resulting $\y$'s.

Claim~\ref{claim:minimal-M-structural} and its proof are meant to be read in the context of the proof of Theorem~\ref{thm:full-downcoding} and freely uses its notation.

\begin{proof}[Proof of Theorem~\ref{thm:full-downcoding}]
Let $T = \lceil N^2/\alpha \rceil$ and $\Att = [0,T]$ be the attribute domain. Records $\x \in \Att^D$ are sampled according to the distribution $U$ as follows. First sample $t(\x) \gets [T]$ uniformly at random. Then sample each coordinate $x_d$ of $\x$ i.i.d.\ with $\Pr[x_d = t(\x)] = 1/2k$ and $x_d =0$ otherwise.
In other words, $\x\in \{0,t(\x)\}^D$ consists of $D$ independent samples from $t(\x) \cdot \Bern(1/2k)$.

All the $t(\x)$ will be distinct except with probability at most ${N \choose 2} \frac{1}{T}< \alpha/2$. If all $t(\x)$ are distinct, we say $\X$ is \emph{collision-free.}
The remainder of the proof shows that the adversary succeeds with high probability conditioned on $\X$ collision-free.

Figure~\ref{fig:main-thm-hierarchy} defines the generalization hierarchy. It consists of intervals $[0,t]$ and singletons $\{t\}$ for $t \in [T]$.

Claim~\ref{claim:minimal-M-structural} states that the output $\Y\gets M(\X,H)$ of a minimal hierarchical $k$-anonymizer must take a restricted form.
For $\y\in\Y$, let $\X_\y = \{\x \in \X : \Y(\x) = \y\}$ be the records in $\X$ that correspond to a copy of $\y \in \Y$.
The claim states that
$$\Pr\biggl[\max_\y|\X_\y| < 2k \Bigmid \X \mbox{ collision-free}\biggr] > 1-\negl(N).$$
Moreover, if $\X$ is collision-free then for all $\y \in \Y$ and $d \in [D]$:
\begin{equation}\label{eqn:1n23l1}
	y_d = [0,\max_{\x\in\X_\y}x_d].
\end{equation}

Let $\A$ be deterministic adversary that on input $\Y$ does the following. For $t$, pick $\y^t = (y^t_1,\dots,y^t_D) \in \Y$ such that $\exists d\in[D]$, $y^t_d = [0,t]$. Let $\y^t = \bot$ if no such $d$ exists.
By the \eqref{eqn:1n23l1}, all $\y$ satisfying the above are identical.
If $\y^t\neq \bot$, we define the following subsets of $[D]$:
\begin{align*}
&D^t(\Y) =\{d : y^t_d = [0,t]\} \\
&D^{>t}(\Y) =\{d : y^t_d = [0,t'] \mbox{ for } t'>t\} \\
&D^{<t}(\Y) =\{d : y^t_d = [0,t'] \mbox{ for } t'<t\}.
\end{align*}

If $\y^t \neq \bot$, $\A$ writes $\z^t = (z^t_1,\ldots,z^t_D)$ to the output $\Z$, where
\begin{equation}
		\label{eq:z-full}
	z_d^t =
	\begin{cases}
	0 & d \in D^{<t}(\Y) \\
	t & d \in D^{t}(\Y) \\
	[0,t] & d \in D^{>t}(\Y)
	\end{cases}
\end{equation}

Let $T_\X = \{t : \y^t \neq \bot\}$.
$|\Z| = |T_\X|$, and it is easy to see that $\Pr[|T_\X| = N \mid \X\mbox{ collision-free}] > 1-\negl(N)$.
Hence if $\X$ is collision-free, then $\Z \prec_N \Y$ by construction. In this case, we assume without loss of generality that the rows in $\Z$ are ordered in a way that $\z_n \subseteq \y_n$.
It follows immediately from the construction that $\z_n \subsetneq_D \y_n$.

If $\X$ is collision-free, then for every $t\in T_\X$ there is a unique $\x^t = (x^t_1,\ldots,x^t_D) \in \X$ such that $t(\x^t) = t$.
By \eqref{eqn:1n23l1} and the fact that $\x^t \in \{0,t\}^D$, $\x^t \subset \z^t$. Hence if $\X$ is collision-free, then $\X \preceq \Z$ with high probability, proving the first part of the theorem.
\end{proof}

The following claims is meant to be read in the context of the proof of Theorem~\ref{thm:full-PSO} and freely uses notation therefrom.

\begin{claim}\label{claim:minimal-M-structural}
For $\y \in \Y$, let $\X_\y = \{\x \in \X : \Y(\x) = \y\}$ be the records in $\X$ that correspond to a copy of $\y \in \Y$.
If $\X$ is collision-free, then for all $\y\in\Y$ and $d\in[D]$:
\begin{equation*}
	y_d = [0,\max_{\x \in \X_\y} x_d].
\end{equation*}
Moreover, $\Pr[\max_\y |\X_\y|<2k \mid \X\mbox{ collision-free}] > 1-\negl(N)$.
\end{claim}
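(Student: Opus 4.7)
I would leverage minimality twice: once to pin down the exact shape of each $y_d$, and once to bound equivalence-class sizes by exhibiting a valid split of any oversize class.

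\paragraph{First statement.} The hierarchy $H$ consists only of singletons $\{t\}$ and prefix intervals $[0,t]$, so each $y_d$ must be one of these. By correctness, $y_d \supseteq \{x^j_d : \x^j \in \X_\y\}$. The case $y_d = \{t\}$ with $t > 0$ is ruled out: it would force $x^j_d = t = t(\x^j)$ for every $\x^j \in \X_\y$, contradicting collision-freeness together with $|\X_\y| \ge k \ge 2$. Hence either $y_d = \{0\}$ (which is only consistent with $m_d := \max_{\x \in \X_\y} x_d = 0$) or $y_d = [0,t]$ with $t \ge m_d$. If $y_d$ strictly contained $[0, m_d]$, I would replace $y_d$ by $[0, m_d]$ in every copy of $\y$ in $\Y$: the resulting $\Y^*$ strictly refines $\Y$, respects $H$, still generalizes $\X$, and remains $k$-anonymous because shrinking a common label can only merge its equivalence class into a (possibly larger) pre-existing one. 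That contradicts minimality, forcing $y_d = [0, m_d]$.

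\paragraph{Second statement.} Suppose for contradiction that some $\y\in\Y$ has $|\X_\y| = s \ge 2k$. Order the records of $\X_\y$ by their distinct $t$-values $t_1 < \cdots < t_s$ and split $\X_\y = A \sqcup B$ with $A = \{\x^1,\ldots,\x^{s-k}\}$ and $B = \{\x^{s-k+1},\ldots,\x^s\}$, so $|A| \ge k$ and $|B| = k$. Define $\y^A_d = [0,\max_{\x\in A} x_d]$ and $\y^B_d = [0,\max_{\x\in B} x_d]$, and let $\Y^*$ be obtained from $\Y$ by relabelling the records of $A$ by $\y^A$, the records of $B$ by $\y^B$, and leaving all other records unchanged. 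Then $\Y^* \preceq \Y$, respects $H$, generalizes $\X$, and is $k$-anonymous since the two new equivalence classes have sizes $\ge k$ (and other classes can only grow).

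\paragraph{Main obstacle.} The delicate step is verifying that $\Y^* \prec \Y$ is \emph{strict}. Fix a coordinate $d$ with $m_d > 0$: since each $x^j_d \in \{0,t_j\}$ and the $t_j$ are distinct, there is a unique $j^*$ with $x^{j^*}_d = t_{j^*} = m_d$. That record lies wholly in $A$ or wholly in $B$, so exactly one of $\y^A_d, \y^B_d$ equals $[0, m_d]$ and the other is strictly contained in it, witnessing $\Y^* \prec \Y$. The only obstruction is $m_d = 0$ for every $d$, which forces $\x = \vec{0}$ for every $\x \in \X_\y$. To bound the probability of this bad event I would use $\Pr[\x = \vec{0}] = (1 - 1/2k)^D \le e^{-D/(2k)}$; since $D = \omega(\log N)$ this is $\negl(N)$, so a union bound over the $N$ records gives $\Pr[\vec{0} \in \X] = \negl(N)$, and conditioning on the constant-probability collision-free event preserves negligibility. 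Combining this with the minimality contradiction yields $\max_\y |\X_\y| < 2k$ with probability $1 - \negl(N)$ conditional on collision-freeness.
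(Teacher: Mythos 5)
Your proof is correct and follows essentially the same route as the paper's: minimality forces $y_d=[0,\max_{\x\in\X_\y}x_d]$ via a coordinate-wise shrink, and an oversize class $|\X_\y|\ge 2k$ is split into two parts of size $\ge k$ whose recoded labels give a $k$-anonymous, hierarchy-respecting strict refinement, contradicting minimality, with the only failure mode being an all-zero class. Your handling of the probability bound (union-bounding $\Pr[\vec{0}\in\X]$ over records so the argument becomes deterministic for every equivalence class $M$ might form) is if anything slightly more careful than the paper's.
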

\begin{proof}
	Both parts of the claim rely on the minimality of $M$.

	Recall that $\x \in \{0,t(\x)\}^D$.
	Let $T^*_d(\X_\y) = \{x_d : \x \in \X_\y\}\subset \cup_{\x \in \X_\y} \{0,t(\x)\}$ be the set of all values in the $d$th column of $\X_\y$.
	$\X$ collision-free implies that either $0\in T^*_d(\X_\y)$ or $|T^*_d(\X_\y)|\ge 2$ (probably both).
	Because $M$ is correct and hierarchical, $T^*_d(\X_\y)\subset y_d \in H$.
	Hence, by construction of $H$, $y_d = [0,t_d]$ for some $t_d \in [0,T]$.
	Let $t_d^* = \max_{\x \in \X_\y} x_d$.
	Correctness requires $[0,t_d^*] \subset [0,t_d]$.
	Moreover, replacing $\y = [0,t_d]$ with $[0,t_d^*]$ would yield a $k$-anonymous, hierarchy-respecting refinement of $\Y$. By minimality of $M$, $[0,t_d] \subset [0,t_d^*]$. Hence, $y_d = [0,t_d^*]$.

	It remains to prove the bound on $\max_\y |\X_\y|$.
	Let $\X_0$ and $\X_1$ be an arbitrary partition of $\X_\y$.
	For $b \in \{0,1\}$, define $\y_b'\subset \y$ as:
	$$\y_b' = (y_{b,1}', \dots, y_{b,D}') = ([0,\max_{\x \in \X_b}x_1], \dots, [0,\max_{\x \in \X_b}x_D])$$

	$\Pr[\exists b,~\y_b' \subsetneq \y \mid \X\mbox{ collision-free}] > 1-\negl(N)$.
	To see why, observe that $\y_0' = \y = \y_1'$ implies that for every coordinate $d$, $\max_{\x \in \X_0}(x_d)  = \max_{\x \in \X_\y}(x_d)= \max_{\x \in \X_1}(x_d)$.
	If $\X$ is collision-free, this implies that for all $d$, $\max_{\x \in \X_\y}(x_d) = 0$.
	This occurs with probability $1-2^{-D} = 1-\negl(N)$ (even conditioned on collision-free).

	Consider $\Y'$ constructed by replacing every instance of $\y$ in $\Y$ with $\y'_0$ or $\y'_1$, using $|\X_0|$ and $|\X_1|$ copies respectively. By construction, $\Y'$ correctly generalizes $\X$ and respects the hierarchy $H$. By the preceding argument, $\Y'$ strictly refines $\Y$ with high probability.
	Thus, by minimality of $M$, $\Y'$ cannot be $k$-anonymous. This means that for every partition $\X_0,\X_1$, one of $|\X_b|\le k-1$. Therefore, $|\X_\y| < 2k$.

\end{proof}

The following claim is used to prove Theorem~\ref{thm:full-PSO}. It is meant to be read in the context of Theorem~\ref{thm:full-downcoding} and freely uses notation therefrom.
\begin{claim}
	\label{claim:well-spread}
	Let $k\ge 2$, $D = \omega(\log N)$, $U$, $\X$, $\Y$, and $D^t(\Y)$ as defined in the proof of Theorem~\ref{thm:full-downcoding}. Let $T' = \{t : \z^t \in \Z\}$.
	\begin{align}
	\Pr\biggl[\forall t \in T' : |D^t(\Y)| \ge \frac{D}{4ke} \Bigmid \X\mbox{ coll-free}\biggr]
	&> 1-\negl(N) \label{eq:claim:well-spread}
	\end{align}
	Equation \eqref{eq:claim:well-spread} also holds for $k = N$, $D = \omega(N\log N)$.
\end{claim}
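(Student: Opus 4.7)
The plan is to condition on $\X$ being collision-free and on the high-probability event of Claim~\ref{claim:minimal-M-structural}, which gives $y^t_d = [0, \max_{\x \in \X_{\y^t}} x_d]$ and $|\X_\y| \le 2k-1$, and then for each $t \in T'$ to lower-bound $|D^t(\Y)|$ by a sum of $D$ i.i.d.\ Bernoullis of mean at least $D/(2ke)$ before closing with Chernoff and a double union bound.

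The reduction proceeds as follows. Fix $t \in T'$; collision-freeness yields a unique index $n_t$ with $t(\x_{n_t}) = t$, and Claim~\ref{claim:minimal-M-structural} places $\x^t := \x_{n_t}$ in $\X_{\y^t}$. Because every coordinate of every record lies in $\{0, t(\x)\}$ and types are distinct, the condition $\max_{\x \in \X_{\y^t}} x_d = t$ is equivalent to $x^t_d = t$ together with $x_{n,d} = 0$ for every $n \in \X_{\y^t} \setminus \{n_t\}$ with $t(\x_n) > t$. For any fixed index set $I \subseteq [N]$ with $n_t \in I$ and $|I| \le 2k-1$, define $D^t(I)$ to count coordinates $d$ satisfying the analogous condition with $I$ in place of $\X_{\y^t}$. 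Conditioning on the types $(t(\x_n))_n$, the $D$ coordinate-wise indicators are i.i.d.\ Bernoullis with success probability $\frac{1}{2k}(1-\frac{1}{2k})^s$, where $s \le 2k-2$ is the number of indices in $I \setminus \{n_t\}$ whose type exceeds $t$. Since $(1-\frac{1}{2k})^{2k-2} \ge 1/e$ for every $k \ge 1$, each success probability is at least $\frac{1}{2ke}$, so $\E[D^t(I)] \ge D/(2ke)$ and a multiplicative Chernoff bound gives $\Pr[D^t(I) < D/(4ke)] \le \exp(-D/(16ke))$.

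To finish the constant-$k$ case, observe that under the event of Claim~\ref{claim:minimal-M-structural} the inequality $|D^t(\Y)| < D/(4ke)$ forces $D^t(I) < D/(4ke)$ for the specific $I$ indexing $\X_{\y^t}$, of which there are at most $N^{2k-1}$ candidates. Union-bounding over these $I$ and over $|T'| \le N$ values of $t$ gives a total failure probability $\le N^{2k} \exp(-D/(16ke))$, which is $\negl(N)$ for constant $k$ and $D = \omega(\log N)$. The main obstacle is precisely this enumeration over $I$, which is needed because $M$ may choose $\X_{\y^t}$ adversarially as a function of the random coordinates; one cannot simply condition on the equivalence class without biasing the distribution on its rows. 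For $k = N$ the factor $N^{2k-1}$ would be prohibitive but is also unnecessary: $k$-anonymity, Claim~\ref{claim:minimal-M-structural}, and $|\X| = N$ together force the unique equivalence class to be $\X$ itself, so $I = [N]$ is determined; the per-coordinate bound is unchanged and a single union bound over $t$ yields $N \exp(-D/(16Ne)) = \negl(N)$ whenever $D = \omega(N \log N)$.
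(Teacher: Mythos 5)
Your proof is correct and follows essentially the same route as the paper's: both identify coordinates where the target record uniquely attains the column maximum within its equivalence class, apply a Chernoff bound for each fixed candidate index set $I$ with $k \le |I| < 2k$, union-bound over the at most $N^{O(k)}$ such sets and over $t$, and handle $k=N$ by noting the equivalence class is forced to be all of $\X$. The only differences are cosmetic (your sufficient condition ignores lower-type records, giving a slightly better constant in the exponent, and you make the conditioning on types explicit).
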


\begin{proof}[Proof of Claim~\ref{claim:well-spread}]
	The proof is an application of Chernoff and union bounds.
	We rewrite $D^t(\Y)$ as $\{d : \exists n,~\Y_{n,d} = [0,t]\}$.

	For $\y \in \Y$, let $\X_\y$ contain the records $\x$ that correspond to a copy of $\y$.
	Consider $\x^*\in \X_\y$, and let $t^*  = t(\x^*)$.
	We call $d$ SUPER if $(x^*_{d} \neq 0)$ and $(x'_d = 0 \text{ for all } \x' \in \X_\y\setminus\{\x^*\})$.
	By Claim~\ref{claim:minimal-M-structural}, if $d$ is SUPER then $d \in D^t(\Y)$.
	We will lower bound the number of SUPER $d$.

	For an index set $I\subseteq [N]$, let $\X_I = \{\x_n\}_{n\in I}$.
	By Claim~\ref{claim:minimal-M-structural},
	$$\Pr[\exists I \text{ st } (|I|<2k) \land (\X_\y = \X_I) ~\mid~ \X\mbox{ coll-free}] > 1-\negl(N).$$
	Observe that if $\X_\y = \X_I$, then $\x^* \in \X_I$ and $|I| \ge k$ (by $k$-anonymity).

	We call $d$ GOOD with respect to $\X_I$ if there is a unique $\x \in \X_I$ such that $x_d \neq 0$. Let $D_I = \{d \text{ GOOD wrt } \X_I\}$.
	Observe that if $\X_\y = \X_I$ and $d$ is SUPER, then $d$ is GOOD with respect to $\X_I$. Therefore
	$$|D^t| \ge \min_{I : k\le|I|<2k, \x^* \in I} |D_I|$$
	except with at most negligible probability (conditioned on $\X$ collision-free).

	For fixed $I$, $\E[|D_I|] = D(1/2k)(1-1/2k)^{|I|} > D/2ke$ where the last inequality follows from  $|I|<2k$.
	By a Chernoff bound:
	$	\Pr[|D_I| \le D/4ke] \le e^{-D/32ke}.$
	By the union bound:
	\begin{align*}
	\Pr\left[\min_{I : k\le|I|<2k, \x^* \in I} |D_I| \le D/4ke\right]
	&\le e^{-D/32ke}\sum_{|I| = k}^{2k-2} {N\choose |I|}
\end{align*}
For $k = O(1)$, $\sum_{|I| = k}^{2k-2} {N\choose |I|} < N^{2k-2}$.
Putting it all together with a final union bound:
\begin{equation*}
	\Pr\biggl[\exists t,~|D^t| \le D/4ke \Bigmid \X\mbox{ collision-free}\biggr]
	< N^{2k-1}e^{-D/32ke}.
\end{equation*}
For $D = \omega(\log N)$, this probability is negligible.
For $k=N$ and $D = \omega(N\log N)$, the set $\{I : k \le |I| < 2k\}$ is a singleton, doing away with the need for a union bound. In this case the upper bound is $Ne^{-D/32Ne} < \negl(N)$.
\end{proof}

\fi

\section{Predicate singling-out attacks on syntactic privacy techniques}
\label{sec:pso}

Our downcoding attacks yield powerful predicate singling-out (PSO) attacks against minimal hierarchical $k$-anonymous mechanisms.
PSO attacks were recently proposed as a way to demonstrate that a privacy mechanism fails to legally anonymize under Europe's General Data Protection Regulation \cite{altman2020hybrid, CohenNissimSinglingOut}. Our new attacks undermine the use $k$-anonymity and other QI-deidentification techniques for GDPR compliance, challenging prevailing European guidance on anonymization \cite{wp-anonymisation}.

In this section, we recall the prior work on PSO attacks and define a generalization called \emph{compound PSO attacks}. We prove that minimal hierarchical $k$-anonymizers enable compound PSO attacks.

\subsection{Background on PSO attacks}

Predicate singling-out attacks were recently introduced by Cohen and Nissim in the context of data anonymization under Europe's General Data Protection Regulation (GDPR) \cite{CohenNissimSinglingOut}.
They were proposed as a mathematical test to show that a privacy mechanism fails to legally anonymize data under GDPR \cite{altman2020hybrid, CohenNissimSinglingOut}. A mechanism $M$ legally anonymizes under GDPR if it suffices to transform regulated \emph{personal data} into unregulated \emph{anonymous data}. That is, if $M(\X)$ is free from GDPR regulation regardless of what $\X$ is. If a mechanism enables PSO attacks, then it does not legally anonymize under GDPR \cite{altman2020hybrid}.

Informally, $M$ enables PSO attacks if given $M(\X)$, an adversary is able to learn an extremely specific description $\psi$ of a single record in $\X$. Because $\psi$ is so specific, it not only distinguishes the victim in the dataset $\X$, but likely also in the greater population.
Hence PSO attacks can be a stepping stone to more blatant attacks.

Formally, we consider a dataset $\X = (\x_1, \ldots, \x_n)$ sampled i.i.d.\ from distribution $U$ over universe $\Att^D$.
The PSO adversary $A$ is a non-uniform probabilistic Turing machine which takes as input $M(\X)$ and produces as output a \emph{predicate} $\psi:\Att^D\to\zo$.
$\psi$ \emph{isolates} a record in a dataset $\X$ if there exists a unique $\x \in \X$ such that $\psi(\x) = 1$. Equivalently, if $\psi(\X) = \sum_{\x \in \X} \psi(\x)/n = 1/n$.
The strength of a PSO attack is related to the \emph{weight} of the predicate $\psi$ output by $A$: $\psi(U) \triangleq \E(\psi(\x))$ for $\x\sim U$. We simplify the definitions from \cite{CohenNissimSinglingOut} to their strongest setting: where $\psi(U) < \negl(n)$.

To perform a PSO attack, $\adv$ outputs a single negligible-weight predicate $\psi$ that isolates a record $\x\in\X$ with non-negligible probability.

\begin{definition}[Predicate singling-out attacks (simplified) \cite{CohenNissimSinglingOut}]
	\label{def:PSO}
	$M$ \emph{enables predicate singling-out (PSO) attacks} if there exists $U$, $\adv$, and $\beta(n)$ non-negligible such that
	$$\Pr_{\substack{\X\gets U^n \\ \psi\gets \adv(M(\X))}}[\psi(\X) = 1/n \land \psi(U)< \negl(n)] \ge \beta(n).$$
\end{definition}

Cohen and Nissim give a simple PSO attack against a large class of $k$-anonymizers which they call \emph{bounded}. A $k$-anonymizer is bounded if there is some maximum $k_{\mathrm{max}}$ such that for all $\X$, the effective anonymity of every row of $\Y$ is at most $k_{\mathrm{max}}$.
The attacker outputs $L = O(N)$ disjoint negligible-weight predicates $\psi$.
If $M$ is bounded, each $\psi$ isolates a row in $\X$ with probability about $\eta/e\gg 0$ independently, where  $\eta \in [0,1]$ is a parameter that depends on $M$ and $U$.

\subsection{Compound predicate singling-out attacks}

PSO attacks can be unsatisfying.
For example, the attack from \cite{CohenNissimSinglingOut} outputs $L$ predicates and at best about $L/e$ manage to actually isolate a record in the dataset $\X$. Moreover, which predicates isolate and which don't is impossible for the attacker to know without additional information.
So even though there exists many isolated records with high probability, the attacker doesn't know which ones or how many.
In contrast, consider an attacker that outputs $L = N$ predicates, each of which isolates a distinct record in $\X$. It is obvious the new attacker is stronger, but in a way that isn't captured by the definition of predicate singling-out.

We define a generalization of PSO attacks called \emph{compound} PSO attacks.
Whereas PSO attacks only require that a record is isolated with non-negligible probability, compound PSO attacks require many records to be isolated often.

To perform a compound PSO attack, $\adv$ outputs multiple negligible-weight predicates $\Psi = \{\psi_1,\dots,\psi_L\}$ each of which isolates a distinct record $\x \in \X$ with probability at least $1-\alpha$. The strength of the attack is measured by $L$ and $\alpha$, with $L\to n$ and $\alpha\to 0$ reflecting stronger attacks.
Vanilla PSO attacks correspond to the setting $L=1$ and $\alpha = 1-\beta$.

\begin{definition}[$(\alpha,L)$-compound-PSO attacks]
	\label{def:compound-PSO}
	$M$ \emph{enables $(\alpha,L)$-compound predicate singling-out attacks} if there exists $U$, $\adv$ such that
	\begin{equation*}
	\Pr\biggl[\forall \psi,\psi'\in \Psi:
	\begin{array}{c}
		\psi(\X) =1/n
		\land \psi(U) < \negl(N) \\
		\land (\psi\wedge \psi')(U) = 0
		\land |\Psi| \ge L
		\end{array}
		\biggr] \ge 1-\alpha(N)
\end{equation*}
	in the probability experiment $\X\sim U^N$, $\Psi\gets \A(M(\X))$.
\end{definition}

In the language of compound attacks, the prior work gives an $(1-O(e^{-L}), L)$-compound-PSO attack against bounded $k$-anonymizers
for $L < cN$ and some $c>0$.

Our compound PSO attacks are much stronger. Theorem~\ref{thm:weaker-PSO} gives a $(\negl(N), \Omega(N))$-compound-PSO attack, and Theorem~\ref{thm:full-PSO} gives a $(\poly(1/N), N)$-compound-PSO-attack.
In both attacks, the adversary fails only if the dataset $\X$ is atypical in some way. If the dataset is typical, the compound PSO attack always succeeds regardless of what the mechanism $M$ does.
The tradeoff is that our new attacks only work on minimal hierarchical $k$-anonymizers (instead of all bounded $k$-anonymizers) and with more structured data distributions $U$ (instead of any $U$ with moderate min-entropy).

\begin{theorem}\label{thm:weaker-PSO}
For all $k \ge 2$, $D = \omega(\log N)$, there exist a distribution $U$ over $\mathbb{R}^D$, a generalization hierarchy $H$, such that all minimal hierarchical $k$-anonymizers $M$ enable $(\negl(N),\Omega(N))$-compound-PSO attacks.
\end{theorem}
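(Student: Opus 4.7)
The plan is to obtain the compound PSO attack as a direct corollary of Theorem~\ref{thm:weaker-downcoding}, reusing verbatim the data distribution $U$ from Example~\ref{ex:clustered-gaussians} and the hierarchy $H$ of Figure~\ref{hierarchy:natural}. On input $\Y$, the compound PSO adversary $\A_{\mathrm{PSO}}$ would first invoke the downcoding adversary of Theorem~\ref{thm:weaker-downcoding} to obtain $\Z$, and then output the set $\Psi = \{\psi_t\}$ of predicates $\psi_t(\x) = \indic{\x \in \z^t}$, where $\z^t$ (as in~\eqref{eq:z-weak}) ranges over the singleton strictly refined records that the downcoder inserts for each $\widehat{\Y}$-good cluster.

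I would then verify the four requirements of Definition~\ref{def:compound-PSO} separately, each leveraging structure already established inside Theorem~\ref{thm:weaker-downcoding}. The size bound $|\Psi| \ge \Omega(N)$ follows directly, since that proof already guarantees $\Omega(N)$ $\widehat{\Y}$-good clusters with probability $1 - \negl(N)$, and each contributes exactly one $\psi_t$. Isolation $\psi_t(\X) = 1/N$ follows from the same $\X$-good / $\widehat{\Y}$-good analysis: the unique large-noise record $\x^t \in \X_\bg$ of cluster $t$ satisfies $\x^t \in \z^t$ because the set of coordinates where $\x^t$ is large coincides with $\bg_t$; the $k-1$ small-noise records of that cluster lie in $H_{t,\sm}^D$ and therefore violate the $H_{t,\bg}$ constraint on the $\bg_t$-coordinates of $\z^t$; and records of any other cluster $t'\neq t$ lie in the disjoint region $H_{t'}^D$. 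Pairwise disjointness $(\psi_t\wedge\psi_{t'})(U) = 0$ for $t\neq t'$ is immediate from the same cluster separation.

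The only new calculation is the weight bound $\psi_t(U) < \negl(N)$, which I expect to be the main---if still minor---obstacle. A fresh $\x\sim U$ satisfies $\psi_t$ only if $\x$ lands in cluster $t$ (probability $1/T = k/N$), is generated with large-noise variance (probability $1/k$), and whose coordinate-wise large/small pattern matches the specific pattern $\bg_t$ across all $D$ dimensions (probability $2^{-D}$, by the coordinate-independence guaranteed in Example~\ref{ex:clustered-gaussians}). Hence $\psi_t(U) = O(2^{-D}/N) = \negl(N)$ because $D = \omega(\log N)$, and this bound holds uniformly over any realization of $\bg_t$, so it does not interact badly with the data-dependence of $\Psi$. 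Taking a union bound over the $\Omega(N)$ $\widehat{\Y}$-good clusters then yields the desired $(\negl(N),\Omega(N))$-compound-PSO attack; essentially all of the heavy lifting has already been done in Theorem~\ref{thm:weaker-downcoding}.
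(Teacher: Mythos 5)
Your proposal matches the paper's own proof essentially step for step: the paper likewise runs the downcoding adversary of Theorem~\ref{thm:weaker-downcoding} on $\Y$, outputs the predicates $\psi^t:\x\mapsto\indic{\x\subseteq\z^t}$ for the special records $\z^t$ of equation~\eqref{eq:z-weak}, and reduces the four conditions of Definition~\ref{def:compound-PSO} to uniqueness of the matching record, pairwise disjointness of the $\z^t$, and negligible weight, all inherited from the $\X$-good/$\hat{\Y}$-good analysis. Your explicit weight calculation $\psi_t(U)=O(2^{-D}/N)$ is exactly the bound the paper treats as immediate, so the argument is correct and not a different route.
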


\begin{theorem}\label{thm:full-PSO}
	For all constants $k \ge 2$, $\alpha > 0$, $D = \omega(\log N)$, and $T = \lceil N^2/\alpha \rceil$, there exists a distribution $U$ over $\Att^D = [0,T]^D$, a generalization hierarchy $H$, such that all minimal hierarchical $k$-anonymizers $M$ enable $(\alpha,N)$-compound-PSO attacks.
	The attack also works for $k=N$ and $D = \omega(N \log N)$.
\end{theorem}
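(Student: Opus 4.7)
The plan is to lift the downcoding adversary from Theorem~\ref{thm:full-downcoding} to a compound PSO adversary in the most natural way: for each $t \in T_\X$, define the predicate $\psi_t(\x) = \indic{\x \in \z^t}$. Since $\z^t$ pins down every coordinate of $\x^t$ either exactly ($d \in D^{<t}(\Y) \cup D^t(\Y)$) or within $[0,t]$ ($d \in D^{>t}(\Y)$), the predicate is a sufficiently fine-grained description to serve as a singling-out predicate. I would then verify the four requirements of Definition~\ref{def:compound-PSO} one at a time.

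First, isolation. Because $\X \preceq \Z$ with high probability by Theorem~\ref{thm:full-downcoding}, $\x^t \in \z^t$, so $\psi_t(\x^t) = 1$. For any other $\x' \in \X$, collision-freeness gives $t(\x') \neq t$ and $\x' \in \{0, t(\x')\}^D$; but $\z^t$ forces $x'_d = t$ at every $d \in D^t(\Y)$, which is impossible unless $t(\x') = t$. Since $D^t(\Y) \neq \emptyset$ by the very construction of $\z^t$ in the proof of Theorem~\ref{thm:full-downcoding}, no other record in $\X$ satisfies $\psi_t$. Pairwise disjointness under $U$ is essentially free for the same reason: if $\x \in \z^t \cap \z^{t'}$ for $t \neq t'$, then $\x$ has a coordinate equal to $t$ and a coordinate equal to $t'$, contradicting that $U$ is supported on $\{0,t(\x)\}^D$. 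Hence $(\psi_t \wedge \psi_{t'})(U) = 0$ identically.

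Second, the weight bound. A sample $\x \sim U$ satisfies $\psi_t(\x) = 1$ only if $t(\x) = t$ (probability $1/T$) and $x_d = t$ for every $d \in D^t(\Y)$ (conditional probability $(1/2k)^{|D^t(\Y)|}$). This is where Claim~\ref{claim:well-spread} is essential: conditioned on $\X$ collision-free, $|D^t(\Y)| \ge D/(4ke)$ simultaneously for all $t \in T_\X$ except with negligible probability, yielding
\[ \psi_t(U) \le \frac{1}{T}\cdot (2k)^{-D/(4ke)} = \negl(N) \]
for $D = \omega(\log N)$. The same bound works in the $k=N$, $D = \omega(N\log N)$ regime by the corresponding clause of Claim~\ref{claim:well-spread}.

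Finally, I would aggregate the failure probabilities by a union bound over three bad events: (a) $\X$ is not collision-free (probability at most $\alpha/2$ by the choice $T = \lceil N^2/\alpha \rceil$); (b) the downcoding conclusion of Theorem~\ref{thm:full-downcoding} fails, so $|T_\X| < N$ or some $\x^t \not\subseteq \z^t$ (negligible probability conditional on (a) not occurring); and (c) some $|D^t(\Y)| < D/(4ke)$, which makes the weight bound fail (also negligible by Claim~\ref{claim:well-spread}). On the complement of these events, $|\Psi| = N$, every $\psi_t$ isolates a distinct $\x^t \in \X$, each has negligible weight, and any two are disjoint under $U$. Choosing $N$ large enough absorbs the negligible terms into $\alpha$. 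The main obstacle is not any of the steps individually but rather the coupling: all three conclusions must hold \emph{simultaneously} for the \emph{same} $\Y$, so I would be careful to condition once on collision-freeness and then apply Claim~\ref{claim:minimal-M-structural} and Claim~\ref{claim:well-spread} in sequence rather than reusing independent randomness.
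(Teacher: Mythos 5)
Your proposal is correct and follows essentially the same route as the paper: emulate the downcoding adversary of Theorem~\ref{thm:full-downcoding}, output the predicates $\psi^t:\x\mapsto\indic{\x\subseteq\z^t}$, and verify isolation, pairwise disjointness, and the weight bound, with Claim~\ref{claim:well-spread} supplying the lower bound on $|D^t(\Y)|$ needed for $\psi^t(U)<\negl(N)$ and the $\alpha/2$ collision-free budget absorbing the remaining failure probability. The details you fill in (uniqueness of $\x^t$ via collision-freeness and $D^t(\Y)\neq\emptyset$, disjointness under the support of $U$) match what the paper's proof outline asserts as immediate.
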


These theorems mirror Theorems~\ref{thm:weaker-downcoding} and \ref{thm:full-downcoding}, inheriting their advantages and disadvantages.
Proofs for both attacks follow the same general structure, using the corresponding downcoding attacks in non-black-box ways (Appendix~\ref{app:proofs}).
The key observation is that some of the downcoded records in the downcoding attacks immediately give the predicates needed to predicate single-out.

Algorithm~\ref{alg:PSO-example} illustrates the compound-PSO adversary for the example of clustered Gaussians described in Section~\ref{sec:downcoding:example}. Compare to the downcoding adversary in Algorithm~\ref{alg:weak-downcoding-example}. Instead of outputting a complete dataset $\Z$ (as in the downcoding attack), we simply output descriptions of certain records within $\Z$.
Namely, $\mathsf{matches}(\z):\x \mapsto \{0,1\}$ is the predicate that outputs $1$ if and only if $\x$ is consistent with $\z$ (i.e., $\x \subseteq \z)$.

\medskip
\begin{algorithm}
	\label{alg:PSO-example}
 \KwData{$\Y$, $k$}
 \KwResult{$\Psi$}
 \For{cluster $t = 1,\dots,T$}{
	Let $\hat{\Y}_t$ be the records with an entry in $[A_t, A_{t+1})$\;
	\If{$|\hat{\Y}_t| \neq k$}{
		continue\;
	}
	\BlankLine
	$\bg_t \gets \{d: y^t_d = [A_t, A_{t+1}]\}$\;
	$b_t \gets |\bg_t|$\;
	\eIf{$|b_t - D/2| > D/8$}{
		continue\;
	}{
		$\Psi\gets \Psi \cup \{\mathsf{matches}(\z^t)\}$, where
		\vspace{-0.5em}
		\begin{equation*}
			z_d^t =
			\begin{cases}
			[B_t,D_t) & d \not\in\bg_t \\
			[A_t,B_t) \cup [D_i,A_{i+1}) & d \in \bg_t
			\end{cases}
		\end{equation*}
		\vspace{-1em}}}
 \caption{Compound-PSO adversary for the example in Section~\ref{sec:downcoding:example} (compare with Alg.~\ref{alg:weak-downcoding-example}).}
\end{algorithm}

\ifUSENIX
\else

\begin{proof}[Proof outline]
Proofs for both compound PSO attacks follow the same general structure, using the corresponding downcoding attacks in non-black-box ways.
The compound PSO adversary $\A$ gets as input $\Y \gets M_H(\X)$.
It emulates the appropriate downcoding adversary, which produces an output $\Z$ such that $\X \preceq \Z \prec \Y$.

$\Z$ contains special generalized records $\z^t$ indexed by some $t \in [T]$ (equations~\eqref{eq:z-weak} and \eqref{eq:z-full}), and may also contain other records. $\A$ outputs $\Psi = \{\psi^t\}$ where $\psi^t:\x \mapsto \indic{\x \subseteq \z^t}$.

To complete the proof, one must show that the following hold with probability at least $1-\alpha(N)$:
\begin{itemize}
	\item $\psi(\X) =1/N$
	\item $(\psi\wedge \psi')(U) = 0$
	\item $\psi(U) < \negl(N)$
	\item $|\Psi| \ge L$
\end{itemize}
The first three are implied by the following:
\begin{itemize}
\item  $\forall t$, there exists a unique $\x^t \in \X$ such that $\x^t \subseteq \z^t$.
\item  $\forall t\neq t'$, $\z^t \cap \z^{t'} = \emptyset$.
\item  $\forall t$, $\x \sim \Att^D$, $\Pr_{\x}[\x \subset \z^t] < \negl(N)$.
\end{itemize}
For the downcoding attack from Theorem~\ref{thm:weaker-downcoding}, these properties are immediate. For the downcoding attack from Theorem~\ref{thm:full-downcoding}, the first two are immediate and the third follows from Claim~\ref{claim:well-spread}.

The requirements on $L$ and $\alpha$ follow from the parameters of the corresponding downcoding attacks. The downcoding attack for Theorem~\ref{thm:weaker-downcoding} yields $\Omega(N)$ records $\z^t$ with probability $1-\negl(N)$.
The downcoding attack for Theorem~\ref{thm:full-downcoding} yields $N$ records $\z^t$ with probability $1-\alpha(N)$.
\end{proof}

\fi

\section{Reidentifying EdX students using LinkedIn}
\label{sec:edx}

\newcommand{\Qall}{Q_{\mathsf{all}}}
\newcommand{\Qfriend}{Q_{\mathsf{acq}}}
\newcommand{\QfriendEdu}{Q_{\mathsf{acq+}}}
\newcommand{\QLI}{Q_{\mathsf{resume}}}
\newcommand{\Qclassmate}{Q_{\mathsf{posts}}}

\newcommand{\unambig}{\mathsf{u}}

\newcommand{\EdXraw}{\X_{\mathsf{ed,raw}}}
\newcommand{\EdX}{\X_{\mathsf{ed}}}
\newcommand{\EdXclean}{\X_{\mathsf{full\text{-}demo}}}

597,692 individuals registered for 17 online courses offered by Harvard and MIT through the EdX platform \cite{edX_first_year}.
We show that thousands of these students are potentially vulnerable to reidentification.
The EdX dataset represents an egregious failure of $k$-anonymity in practice and in a case where the dataset was ``properly deidentified'' by ``statistical experts'' in accordance with regulations, undermining one of the main arguments used to justify the continued use of QI-deidentification  \cite{ElEmamSystematic}.

EdX collected data about students' demographics, engagement with course content, and final course grade.
EdX sought to make the data public to enable outside research but considered it protected by the Family Educational Rights and Privacy Act (FERPA), a data privacy law restricting the disclosure of certain educational records \cite{edX-dataset}.
``To meet these privacy specifications, the HarvardX and
MITx research team (guided by the general counsel, for the
two institutions) opted for a $k$-anonymization framework'' \cite{edX-how-to}. A value of $k=5$ ``was chosen to allow legal sharing of the data'' in accordance with FERPA.
Ultimately, EdX published the 5-anonymized dataset with 476,532 students' records.

We show that thousands of these students are potentially vulnerable to reidentification. As a proof of concept, we reidentified 3 students out of 135 students for whom we searched for matching users on LinkedIn. Each of the reidentified users {failed to complete} at least one course in which they were enrolled, a private fact disclosed by the reidentification attack.

The limiting factor of this attack was not the privacy protection offered by $k$-anonymity itself, but the fact that many records in the raw dataset were missing demographic variables altogether. In order to boost the confidence of our attack, we restricted our attention to  \emph{unambiguously} unique records. To demonstrate the possibility of attribute disclosure, we further restricted our attention to students that had enrolled in, but failed to complete, a course on EdX.

\subsection{The Harvard-MIT EdX Dataset}

$\EdX$ has 476,532 rows, one per student.\footnote{%
  The dataset as published was such that each row represented a student-course pair, with a separate row for each course in which a student enrolled. Records corresponding to the same student shared a common UID. $\EdX$ as described above is the result of aggregating the information by UID.
  See the appendix for additional background on the EdX dataset.}
Each row contains the student's basic demographic information, and information about the student's activities and outcomes in each of 16 of the 17 EdX courses.

The demographics included self-reported level of education, gender, and year of birth, along with a country inferred from the student's IP address.
Many students chose not to report level of education, gender, and year of birth at all, so these columns are missing many entries.
For each course, $\EdX$ indicates whether the student enrolled in the course, their final grade, and whether they earned a certificate of completion. $\EdX$ also includes information about students' activities in courses including how many forum posts they made.

$\EdX$ was $5$-anonymized with respect to 17 overlapping quasi-identifiers separately: $Q_1,\dots,Q_{16}$, and $Q_*$ defined next. Recall that each quasi-identifier is a subset of attributes, not a single attribute (Def.~\ref{def:k-anon}).
\begin{itemize}[leftmargin=*]
  \item $Q_i =$ \{gender,  year of birth, country, enrolled in course $i$, number of forum posts in course $i$\}
  \item $Q_* = $ \{enrolled in course 1, \dots, enrolled in course 16\}.
\end{itemize}
Anonymization was done hierarchically. First, locations were globally coarsened to countries or continents. Then other attributes or whole records were suppressed as needed.

\subsection{Uniques in the EdX dataset}

Table~\ref{table:results-k-anon} summarizes the results of all analyses described in this section.
Let $\Qall = Q_* \cup Q_1 \cup \dots \cup Q_{16}$.
$\EdX$ is very far from 5-anonymous with respect to $\Qall$.
We find that 7.1\% of students (33,925 students) in $\EdX$ are unique with respect to $\Qall$ and 15.3\% have effective anonymity less than 5.

Despite EdX's goals, $\EdX$ was not even $5$-anonymous with respect to $Q_*$: 245 students were unique and 753 had effective anonymity less than 5!
We suspect this blunder is due to $k$-anonymity's fragility with respect to post-processing.
The raw data was first 5-anonymized with respect to $Q_*$ and afterwards with respect to $Q_1,\dots,Q_{16}$. Some rows in the dataset were deleted in the latter stage, ruining $5$-anonymity for $Q_*$.

We emphasize that the creators of the EdX dataset never intended or claimed to provide 5-anonymity with respect to $\Qall$. But they admit that each of the attributes in $\Qall$ is potentially public.
In our view, the union of quasi-identifiers should also be considered a quasi-identifier and any exception should be justified. No justification is given.

\begin{table}[h]
  \setlength\belowcaptionskip{-1\baselineskip}
  \begin{center}
  \begin{tabular}{@{}lrrrr@{}} \toprule
  & \multicolumn{2}{c}{$\EA$} & \multicolumn{2}{c}{$\UEA$} \\
  \cmidrule(lr){2-3} \cmidrule(l){4-5}
  Aux info & $=1$ & $<5$ & $=1$ & $<5$ \\ \midrule
  $Q_*$ & 245 & 753 & 245 & 753 \\
  $\Qall$ & 33,925 & 73,136 & 9,125 & 22,491 \\
  $\Qclassmate$ & 120 & 216  & 120 & 216  \\
  & (1.7\%) &(3.0\%) &(1.7\%) &  (3.0\%) \\
  \addlinespace
  $\Qfriend$ & 31,797 & 69,543 & 7,108 & 19,203 \\
  $\QfriendEdu$ & 41,666 & 98,201 & 7,512 & 20,402 \\
  $\QLI$ & 5,542  & 10,939  & 732  &  2,310  \\
   & (34.2\%) &  (67.4\%) &(4.5\%) &  (14.2\%) \\
  \bottomrule
  \end{tabular}
\end{center}
      \begin{center}
    \caption{\label{table:results-k-anon}
    Number of students by effective anonymity ($\EA$) or ambiguous effective anonymity ($\UEA$) with respect to various choices of attacker auxiliary information ($Q_*$, $\Qall$, etc.), as described in this section.
    Numbers in parentheses are the value as a percentage of the relevant subset of the full dataset: for $\Qclassmate$, the 7,251 students with at least one forum post; for $\QLI$, the 16,224 students with at least one certificate. $\EA = \UEA$ for $Q_*$, $\Qclassmate$. }
        \end{center}
\end{table}

\subsubsection{Unambiguous uniques in the EdX dataset}
A naive interpretation of the 7.1\% unique students  is that an attacker who knows $\Qall$ would be able to definitively learn the grades of 7.1\% of the students. But there is a major source of ambiguity: missing information.
Gender, year of birth, and level of education were voluntarily self-reported by students.
Many students chose not to provide this information: 14.9\% of students records are missing at least one of these attributes.
It is missing in the raw data, not just the published data.
Thus, a female Italian born in 1986 might appear in the dataset with any or all three attributes missing.

This makes the 7.1\% result difficult to interpret.
From an inferential standpoint, the relevant question is not how many students have unique quasi-identifiers, but how many are \emph{unambiguously} unique. We compute the ambiguous effective anonymity $\UEA$ (defined in App.~\ref{app:UAE}) of each record by treating any missing attribute values as the set of all possible values for that attribute. This number may be much lower than 7.1\%.
We stress that this ambiguity comes from missing data, not from $k$-anonymity.

We find that 1.9\% of students (9,125 students) are unambiguously unique with respect to $\Qall$ and 4.7\% have ambiguous effective anonymity less than 5. Over 9,000 students are unambiguously identifiable in the dataset to anybody who knows all the quasi-identifiers, without knowing whether the students chose to self-report their gender, year of birth, or level of education. This allows an attacker to draw meaningful inferences about them.

\subsubsection{Limiting the attacker's knowledge}
Students in the EdX dataset are vulnerable to reidentification by adversaries who have much less auxiliary information than $\Qall$. We consider the (ambiguous) effective anonymity for three attackers who could plausibly reidentify students in the EdX dataset: a prospective employer, a casual acquaintance, and an EdX classmate. The results are summarized in Table~\ref{table:results-k-anon}.

In Section~\ref{sec:linkedin}, we carry out the prospective employer attack using LinkedIn. This demonstrates that some students in the EdX dataset can be reidentified by anybody.

\paragraph{Prospective employer}
Consider a prospective employer who is interested in discovering whether a job applicant failed an EdX course.
An applicant is likely to list EdX certificates on their resume.
The employer very likely knows $\QLI = $ \{gender, year of birth, location, level of education, certificates earned in courses 1--16\}.
$\QLI$ only includes those certificates actually earned, but omits courses in which a student enrolled but did not earn a certificate.

5,546 students in $\EdX$ have effective anonymity 1 with respect to $\QLI$, and 10,942 have effective anonymity less than 5.
These numbers may seem small, but they constitute 34.2\% and 67.4\% of the 16,224 students in the dataset that earned any certificates whatsoever.
Moreover, 732 students are unambiguously unique---333 of whom failed at least one course, and 38 of whom failed three or more courses. Thus, \emph{2.1\% of students (333 students) who earned certificates of completion failed at least one course and have unambiguous effective anonymity 1} with respect to $\QLI$.

\paragraph{Casual acquaintance}
Casual acquaintances might, in the course of normal conversation, discuss their experiences on EdX. They would likely discuss which courses they  took, and would naturally know each other's ages, genders, and locations.
So acquaintances know $\Qfriend = $\{gender, year of birth, location, enrollment in courses 1--16\} $\subseteq \Qall$.
6.7\% of students in $\EdX$ have effective anonymity 1 with respect to $\Qfriend$, and 14.6\% have effective anonymity less than 5.

Moreover, acquaintances typically know each other's level of education too, even though this is not included in $\Qall$. If we augment the acquaintance's knowledge with level of education $\QfriendEdu = \Qfriend \cup \{\mbox{education}\}$, then things become even worse.
8.7\% students in $\EdX$ have effective anonymity 1 with respect to $\QfriendEdu$, and 20.6\% have effective anonymity less than 5.

\paragraph{EdX classmate}
Each EdX course had an online forum for student discussions. Because these posts were public to all students enrolled in a given course, the number of forum posts made by any user was deemed publicly available information. But ignoring composition, EdX  did not consider the combination of forum post counts made by a user across courses.

Consider an attacker who knows $\Qclassmate = $\{number of forum posts in courses 1--16\} $\subseteq \Qall$.
120 students in $\EdX$ are unambiguously unique with respect to $\Qclassmate$, and 216 have ambiguous effective anonymity less than 5.
These numbers may seem minute, but they constitute 1.7\% and 3.0\% of the 7251 students in the dataset that made any forum posts whatsoever.
Effective anonymity and ambiguous effective anonymity are always the same for this attacker because $\Qclassmate$ excludes the demographic columns that are missing many entries.

Who knows $\Qclassmate$? 20 students \emph{in the dataset itself} enrolled in all 16 courses and could have compiled forum post counts across all courses for all other EdX students. To any one of these 20 students the 120 students with distinguishing forum posts are uniquely identifiable. Such an attacker can then learn these 120 students ages, genders, level of educations, locations, and their grades in the class.

In fact, each of the 120 vulnerable students can be unambiguously uniquely distinguished by 23--70 classmates; 60 students by 40--49 classmates each. This enables more classmates to act as attackers than just the 20 who took all courses.
This is because distinguishing a student using forum posts doesn't require being enrolled in all 16 courses. For each of the 120 vulnerable students, we find which subsets of their forum posts suffices to distinguish them. This analysis amounts to checking whether these students remain unambiguously unique if some subset of their forum post counts are redacted.

\subsection{Reidentifying EdX students on LinkedIn}
\label{sec:linkedin}

On LinkedIn.com, people show off the courses they completed. They may be unwittingly revealing which courses they gave up on.
2.1\% of students who earned certificates of completion (333 students) failed at least one course and have unambiguous effective anonymity 1 with respect to $\QLI$.

We reidentified three of these 333 students, with a rough confidence estimate of 90--95\%.

\subsubsection{Method}
People routinely post $\QLI$ on LinkedIn where it is easily searchable and accessible for a small fee.
We paid \$119.95 for a 1 month Recruiter Lite subscription to LinkedIn. Recruiter Lite provides access to limited search tools along with the ability to view profiles in one's ``extended network'': 3rd degree connections to the account holder on the LinkedIn social network. It is also possible to view public profiles outside one's extended network with a direct link, for example from a Google search.
A real attacker could build a larger extended network or pay for a more powerful Recruiter account.

We performed the attack as follows. We restricted our attention to 135 students in $\EdX$ who were unambiguously unique using only certificates earned plus at most one of gender, year of birth, and location, and who also had no missing demographic attributes.
We manually searched for LinkedIn users that listed matching course certificates on their profile by searching for course numbers (e.g., "HarvardX/CS50x/2012").
We attempted to access the profiles for the resulting users, whether they were in our extended network or by searching on Google.
If successful, we checked whether the LinkedIn user lists exactly the same certificates as the EdX student, and whether the demographic information on LinkedIn was consistent with the EdX student.
If everything matched, we consider this a reidentification.

\subsubsection{Results}
We reidentified 3 of the attempted 135 EdX students, each of whom registered for but failed to complete an EdX course. Two were unambiguously unique using only certificates of completion.
In each case, the EdX student's gender matched the LinkedIn user's presenting gender based on profile picture and name. In each case, the LinkedIn user's highest completed degree in 2013 matched the EdX student's listed level of education.
\begin{enumerate}[leftmargin=*]
\item Student 1's EdX record lists location $\ell_1$ and year of birth as $y_1$. The matching LinkedIn user began a bachelors degree in year $y_1+20$ and was employed in country $\ell_1$ in 2013.

\item Student 2's EdX record lists location $\ell_2$ and year of birth as $y_2$. The matching LinkedIn user began a bachelors degree in year $y_1+18$ and was in country $\ell_2$ for at  part of 2013.

\item Student 3's EdX record lists location $\ell_3$ and year of birth as $y_3$. The matching LinkedIn user graduated high school in year $y_3 + 19$, attended high school and currently works in country $\ell_3$. In 2013 the LinkedIn user was employed by an international firm with offices in $\ell_3$ and other countries.
\end{enumerate}

\subsubsection{Confidence}

We cannot know for sure whether our purported reidentifications on LinkedIn are correct because were instructed by our IRB not to contact the reidentified EdX students.

In this section, we estimate that our  reidentifications are correct with 90--95\% confidence. Moreover, an error is most likely a result of our imperfect ability to corroborate location and year of birth on LinkedIn, not a result of the protection afforded by $k$-anonymity.
Our analysis is necessarily very rough. A precise error analysis is impossible. We omit details to avoid imparting any other impression.

We consider two main sources of uncertainty.
First is the limited information available on LinkedIn profiles, especially age and location. We inferred a range of possible ages by extrapolating from educational milestones. We inferred a set of possible locations based on listed activities around 2013.
Both methods are imperfect.
The locations in EdX were inferred from IP address and are likely imperfect.
LinkedIn users or EdX students can report their attributes inconsistently. Note that they cannot lie about earning EdX certificates: this data comes from EdX itself and the LinkedIn certificates are digitally signed and cryptographically verifiable.\footnote{%
  An example certificate is available here: \url{https://verify.edx.org/cert/26121b8dec124bc094d324f51b70e506}.
  Instructions for verifying the signature are here: \url{https://verify.edx.org/cert/26121b8dec124bc094d324f51b70e506/verify.html}
}
We estimate the probability of error on at least one attribute inferred from LinkedIn is on the order of 5--10\%.

The second source of error is suppressed student records.
Of the 597,692 students enrolled in EdX courses over the relevant period, only 476,532 appear in the published dataset. 121,160 students (20.3\%) are completely suppressed.
We matched students $\x_{\mathsf{edx}}$ in EdX with users on LinkedIn $\x_{\mathsf{li}}$ using $\QLI =$ \{gender, year of birth, location, level of education, certificates earned in courses 1--16\} as well as we could.
An error will occur if a suppressed student $\x'_{\mathsf{edx}}$ is the true match for the LinkedIn user. For this to happen,  $\x'_{\mathsf{edx}}$ and $\x_{\mathsf{edx}}$ must agree on $\Qall$. If $\x_{\mathsf{edx}}$ is unique in the complete dataset, no error occurs.

We do a back of the envelope calculation of the chance of error from record suppression under two simplifying assumptions. First, that random student records are suppressed.\footnote{%
  There are more sophisticated techniques for estimating the probability of error under this assumption \cite{rocher2019estimating}. But in EdX omitted records are ``outliers  and  highly  active  users  because these users are  more  likely  to  be  unique  and  therefore  easy  to  re-identify'' \cite{edX-dataset}. As such, using the more sophisticated techniques would not give more meaning to our very coarse estimates.}
Second, that the number of certificates of completion that a user earns is statistically independent of their other attributes (assuming they registered for enough courses). We compute 99.5\%-confidence upper bounds for two parameters: the probability $p$ that a random EdX student matches our reidentified EdX student; the probability $q$ that a random EdX student earns the  same number of certificates as our reidentified EdX student. 121,160$\cdot pq$ is a very coarse estimate of the probability that a supressed student record causes an error. For the three students we reidentified, this comes out to 0.1--1\%.

A much less likely source of error is suppression of individual courses from a student's record. Such an error will occur if some courses for the purported match $\x_{\mathsf{edx}}$ were suppressed, and there is some other EdX student $\x'_{\mathsf{edx}}$ that is the true match for the LinkedIn user $\x_{\mathsf{li}}$.
This requires course suppression in  $\x_{\mathsf{edx}}$ and also $\x'_{\mathsf{edx}}$ (because $\x_{\mathsf{edx}}$ was unambiguously unique on $\QLI$ in the published EdX data).
All in all, we consider course suppression to be a much less likely source of error than student suppression or imperfect attribute inference on LinkedIn.

\subsection{EdX was ``properly'' deidentified}
El Emam, et al., criticize prior reidentification studies as using data that were ``improperly deidentified'' because they did not ``follow[] existing standards'' \cite{ElEmamSystematic}. They thus conclude that there is no convincing evidence of real-world failure of QI-deidentification techniques in a regulated context.

In contrast, the EdX dataset incontrovertibly followed existing standards. FERPA is the relevant regulation. It requires the published information to not enable identification of any student with reasonable certainty. The EdX dataset was specifically created to comply with FERPA, following Department of Education guidance and overseen by general council for Harvard and MIT \cite{edX-how-to}.

Moreover, the EdX dataset arguably followed the HIPAA Expert Determination--the standard used by El Emam, et al. The Expert Determination standard requires three things:\footnote{\url{https://www.hhs.gov/hipaa/for-professionals/privacy/special-topics/de-identification/index.html}} (1) Deidentification be performed by ``a person with appropriate knowledge \dots and experience''. (2) The person determines that the risk of reidentification is ``very small''. (3) The person ``documents the methods and results of the analysis that justify such determination.'' The creation of the EdX data was overseen by Harvard professors in computer science and statistics with specific expertise in privacy and inference. They find a ``low probability that the dataset will be re-identified'' and their methods and analysis are well-documented~\cite{edX-dataset}. The main deviation from the Expert Determination standard is the difference between ``very small'' and ``low'' reidentification risk.

\section{Conclusions}
\label{sec:conclusion}

In short, we show that $k$-anonymity -- and QI-deidentification generally -- fails \emph{on its own terms}.
Our attacks rebut three primary arguments that QI-deidentification's practioners make to justify its continued use.
First, we reidentify individuals in EdX dataset; it was ``properly de-identified'' by a ``statistical experts'' and in accordance with procedures outlined in regulation, meeting the  high bar set by El Emam, et al.~\cite{ElEmamSystematic}.
Second, our downcoding attacks demonstrate that even if every attribute is treated as quasi-identifying, $k$-anonymity and its refinements may provide no protection. Ours are the first attacks in either of these two settings.
Third, our attacks also undermine the claim that QI-deidentification meets regulatory standards for deidentification. The compound PSO and reidentification attacks challenge $k$-anonymity's status under GDPR and FERPA respectively.

Moreover, our attacks show that QI-deidentification violates three properties of a worthwhile privacy notion, even in practice. Namely, avoiding distributional assumptions, robustness against post-processing, and smooth degradation under composition. We expand on these next.

Downcoding attacks prove that whatever privacy is provided by QI-deidentification crucially depends on unstated assumptions on the data distribution.
One possible pushback is that our downcoding attacks use specially constructed distributions and hierarchies, not naturally occurring ones. But even a contrived counterexample proves that there is some unnoticed distributional assumption that is critical for security.
Moreover, the distributions and hierarchies in Theorem~\ref{thm:weaker-downcoding} are not so unnatural when considering that data are made, not found (to quote danah boyd).
Say an analyst wants to $k$-anonymize a high dimensional dataset. One natural approach is to find a low-dimensional projection with clusters of about $k$ rows each, and then construct the generalization hierarchy over this representation. The result could easily satisfy conditions that enable our downcoding attack or a direct extension.

Robustness against post-processing requires that further processing of the output, without access to the data, should not diminish privacy.
Downcoding proves that QI-deidentification is not robust to post-processing.
Our attacks recover specific secret information about a large fraction of a dataset's entries with probability close to 1.
Also, the EdX dataset also proves that $k$-anonymity is not robust to post-processing for purely syntactic reasons. The result of removing rows from a $k$-anonymous dataset may not satisfy $k$-anonymity as defined.
We see this in the EdX data: it is not in fact $5$-anonymous with respect to quasi-identifier $Q_*$ (courses), despite claims otherwise.
This fragility to post-processing is not so much a privacy failure as a syntactic weakness of the definition itself.

Smooth degradation under composition requires that a combination of two or more private applications mechanisms should also be private, albeit with worse parameters.
The EdX dataset proves that QI-deidentification is not robust to composition, even when done by experts in accordance with strict privacy regulations.
Ganta et al.\ present theoretical {composition attacks}, showing that if the same dataset is $k$-anonymized with different quasi-identifiers the original data can be recovered \cite{ganta2008composition}.
With the EdX dataset the possibility became reality.
To the best of our knowledge, this is the first example of such a failure in practice.

\medskip
The most important open question raised by this work is to characterize the power of downcoding attacks. What properties of a data distribution and generalization hierarchy enable downcoding? Is vulnerability to downcoding testable? In what settings is downcoding provably impossible? Can one demonstrated downcoding in the wild? We leave these questions for future work.

\paragraph{Responsible disclosure and data availability}
After reidentifying one EdX student, we reported the vulnerability to Harvard and MIT who promptly replaced the dataset with a heavily redacted one. Our IRB determined that this research was not human subjects research and did not need IRB approval. However, we were instructed by the IRB not to contact the reidentified LinkedIn users.
The code used in our analysis of the EdX dataset is at \url{https://github.com/a785236/EdX-LinkedIn-Reidentification}, but we do not distribute the dataset itself to protect the students' privacy.

\ifUSENIX
  \bibliographystyle{plain}
\else
  \bibliographystyle{alpha}
\fi
\bibliography{references}

\begin{thebibliography}{10}

\bibitem{AbowdDeclaration}
John~M. Abowd.
\newblock Supplemental {D}eclaration, 2021.
\newblock {State of Alabama v. US Department of Commerce}.

\bibitem{altman2020hybrid}
Micah Altman, Aloni Cohen, Kobbi Nissim, and Alexandra Wood.
\newblock What a hybrid legal-technical analysis teaches us about privacy
  regulation: The case of singling out.
\newblock {\em BUJ Sci. \& Tech. L.}, 27:1, 2021.

\bibitem{edX-how-to}
Olivia Angiuli, Joe Blitzstein, and Jim Waldo.
\newblock How to de-identify your data.
\newblock {\em Communications of the ACM}, 58(12):48--55, 2015.

\bibitem{barbaro_zeller_2006}
Michael Barbaro and Tom Zeller.
\newblock A face is exposed for {AOL} searcher no. 4417749.
\newblock {\em New York Times}, Aug 2006.

\bibitem{ontario2014big}
Ann Cavoukian and Daniel Castro.
\newblock {\em Big data and innovation, setting the record straight:
  de-identification does work}.
\newblock Information and Privacy Commissioner, Ontario, 2014.

\bibitem{cavoukian2014identification}
Ann Cavoukian and Khaled El~Emam.
\newblock {\em De-identification protocols: essential for protecting privacy}.
\newblock Information and Privacy Commissioner, Ontario, 2014.

\bibitem{CohenNissimSinglingOut}
Aloni Cohen and Kobbi Nissim.
\newblock Towards formalizing the gdpr’s notion of singling out.
\newblock {\em Proceedings of the National Academy of Sciences},
  117(15):8344--8352, 2020.

\bibitem{cormode2010minimizing}
Graham Cormode, Divesh Srivastava, Ninghui Li, and Tiancheng Li.
\newblock Minimizing minimality and maximizing utility: analyzing method-based
  attacks on anonymized data.
\newblock {\em Proceedings of the VLDB Endowment}, 3(1-2):1045--1056, 2010.

\bibitem{EdX_privacy}
Jon~P Daries, Justin Reich, Jim Waldo, Elise~M Young, Jonathan Whittinghill,
  Andrew~Dean Ho, Daniel~Thomas Seaton, and Isaac Chuang.
\newblock Privacy, anonymity, and big data in the social sciences.
\newblock 2014.

\bibitem{local-DP}
John~C Duchi, Michael~I Jordan, and Martin~J Wainwright.
\newblock Local privacy and statistical minimax rates.
\newblock In {\em 2013 IEEE 54th Annual Symposium on Foundations of Computer
  Science}, pages 429--438. IEEE, 2013.

\bibitem{DMNS06}
Cynthia Dwork, Frank McSherry, Kobbi Nissim, and Adam Smith.
\newblock Calibrating noise to sensitivity in private data analysis.
\newblock In {\em Theory of cryptography conference}, pages 265--284. Springer,
  2006.

\bibitem{ElEmamSystematic}
Khaled El~Emam, Elizabeth Jonker, Luk Arbuckle, and Bradley Malin.
\newblock A systematic review of re-identification attacks on health data.
\newblock {\em PloS one}, 6(12):e28071, 2011.

\bibitem{ganta2008composition}
Srivatsava~Ranjit Ganta, Shiva~Prasad Kasiviswanathan, and Adam Smith.
\newblock Composition attacks and auxiliary information in data privacy.
\newblock In {\em Proceedings of the 14th ACM SIGKDD international conference
  on Knowledge discovery and data mining}, pages 265--273. ACM, 2008.

\bibitem{k_anon_refinements_survey}
Olga Gkountouna.
\newblock A survey on privacy preservation methods, 2011.
\newblock \url{http://www.dblab.ece.ntua.gr/~olga/papers/olga\_tr11.pdf}.

\bibitem{edX_first_year}
Andrew Ho, Justin Reich, Sergiy Nesterko, Daniel Seaton, Tommy Mullaney, Jim
  Waldo, and Isaac Chuang.
\newblock {HarvardX and MITx}: The first year of open online courses, fall
  2012-summer 2013.
\newblock 2014.

\bibitem{pate-gan}
James Jordon, Jinsung Yoon, and Mihaela Van Der~Schaar.
\newblock Pate-gan: Generating synthetic data with differential privacy
  guarantees.
\newblock In {\em International conference on learning representations}, 2018.

\bibitem{t-closeness}
Ninghui Li, Tiancheng Li, and Suresh Venkatasubramanian.
\newblock t-closeness: Privacy beyond k-anonymity and l-diversity.
\newblock In {\em 2007 IEEE 23rd International Conference on Data Engineering},
  pages 106--115. IEEE, 2007.

\bibitem{MKGV07}
Ashwin Machanavajjhala, Daniel Kifer, Johannes Gehrke, and Muthuramakrishnan
  Venkitasubramaniam.
\newblock \emph{L}-diversity: Privacy beyond \emph{k}-anonymity.
\newblock {\em {TKDD}}, 1(1):3, 2007.

\bibitem{edX-dataset}
MITx and HarvardX.
\newblock {HarvardX-MITx Person-Course Academic Year 2013 De-Identified
  dataset, version 2.0}, 2014.

\bibitem{narayanan2008robust}
Arvind Narayanan and Vitaly Shmatikov.
\newblock Robust de-anonymization of large sparse datasets.
\newblock In {\em IEEE Symposium on Security and Privacy}, 2008.

\bibitem{myths-and-fallacies}
Arvind Narayanan and Vitaly Shmatikov.
\newblock Myths and fallacies of "personally identifiable information".
\newblock {\em Commun. ACM}, 53(6):24--26, June 2010.

\bibitem{BrokenPromises}
Paul Ohm.
\newblock Broken promises of privacy: Responding to the surprising failure of
  anonymization.
\newblock {\em UCLA Law Review}, 57:1701--1777, 2010.

\bibitem{wp-anonymisation}
Article 29 Data Protection~Working Party.
\newblock {\em Opinion 05/2014 on Anonymisation Techniques}.

\bibitem{rocher2019estimating}
Luc Rocher, Julien~M Hendrickx, and Yves-Alexandre De~Montjoye.
\newblock Estimating the success of re-identifications in incomplete datasets
  using generative models.
\newblock {\em Nature communications}, 10(1):1--9, 2019.

\bibitem{samarati2001protecting}
Pierangela Samarati.
\newblock Protecting respondents identities in microdata release.
\newblock {\em IEEE transactions on Knowledge and Data Engineering},
  13(6):1010--1027, 2001.

\bibitem{SamaratiS98}
Pierangela Samarati and Latanya Sweeney.
\newblock Generalizing data to provide anonymity when disclosing information
  (abstract).
\newblock In Alberto~O. Mendelzon and Jan Paredaens, editors, {\em Proceedings
  of the Seventeenth {ACM} {SIGACT-SIGMOD-SIGART} Symposium on Principles of
  Database Systems}. {ACM} Press, 1998.

\bibitem{sweeney2002k}
Latanya Sweeney.
\newblock K-anonymity: A model for protecting privacy.
\newblock {\em International Journal of Uncertainty, Fuzziness and
  Knowledge-Based Systems}, 10(05):557--570, 2002.

\bibitem{wong2009anonymization}
Raymond Chi-Wing Wong, Ada Wai-Chee Fu, Ke~Wang, and Jian Pei.
\newblock Anonymization-based attacks in privacy-preserving data publishing.
\newblock {\em ACM Transactions on Database Systems (TODS)}, 34(2):1--46, 2009.

\end{thebibliography}

\appendix

\section{Additional background on the EdX dataset}
We summarize the EdX dataset---the chosen quasi-identifiers, the implementation of $k$-anonymization, and the resulting published dataset $\EdXraw$ based on documentation included with the dataset \cite{edX-dataset} and in two articles describing the creation of the dataset itself \cite{EdX_privacy,edX-how-to}.

The raw dataset consisted of 841,687 rows for 597,692 students. Each row corresponded to the registration of a single student in a single course and included the information described above.
IP addresses were used to infer a student's location even when a student chose not to self-report their location.
EdX considered username and IP address to be identifying. Each username was replaced by a unique 7-digit identification number (UID). A username appearing in multiple rows was replaced by the same UID in each. IP addresses were redacted.

The dataset, with a row corresponding to a student-course pair, was $k$-anonymized according to two different quasi-identifiers $Q$ and $Q_*$ (each a subset of the attributes).
$Q =$  \{gender,  year of birth, country, course, number of forum posts\}. ``The  last  one  was  chosen  as  a  quasi‐identifier  because  the  EdX  forums are  somewhat  publicly  accessible  and  someone  wishing  to  re-­identify  the  dataset could, with  some  effort,  compile  the  count  of  posts  to  the  forum  by  username'' \cite{edX-dataset}.
Separately, the set of courses that each student enrolled in were considered to form a quasi-identifier: $Q' =$ \{enrolled in course 1, \dots, enrolled in course 16\}.
The data was $k$-anonymized first according to $Q_*$ and then according to $Q$. Additionally, $\ell$-diversity was enforced for the final course grade, with $\ell = 2$.

After aggregating the rows by UID, $\EdX$ can be seen as $k$-anonymized with respect to 17 overlapping quasi-identifiers: $Q_*$ as before and $Q_1,\dots,Q_{16}$, where $Q_i =$ \{gender,  year of birth, country, enrolled in course $i$,  number of forum posts in course $i$\}.

The final published result $\EdXraw$ includes 641,138 course registrations by 476,532 students across 16 courses.

As published, the EdX dataset had 641,138 rows, each representing to a single course registration for one of 476,532 distinct students.
But the object of our privacy concerns is a student, not a student-course pair. We aggregated the rows corresponding to the same UID. We call the result $\EdX$.

The creators of the EdX dataset failed to identify which attributes are publicly available---the very thing that experts are supposed to be good at. Specifically, level of education and certificates of course completion are not included in any of the quasi-identifiers despite both being readily available on LinkedIn. The exclusion of certificates is particularly indefensible: at the same time as the EdX dataset was being created, EdX and LinkedIn collaborated to allows LinkedIn users to include {cryptographically unforgeable} certificates of completion on their profiles.

\footnotetext{Different rows of the same student often listed different countries. Almost always there were only two different values, one of which was ``Unknown/Other.'' In this case, we used the other value for the student's unified record. In all other cases, we used ``Unknown/Other.''}

\subsection{Ambiguous effective anonymity}
\label{app:UAE}
We consider a relaxation of the notion of effective anonymity which we call \emph{ambiguous effective anonymity}. Let $I_\ambig(\Y,\y,Q) \triangleq \{n: \y_n(Q) \cap \y(Q)\neq \emptyset\}$. The ambiguous effective anonymity of $\y$ in $\Y$ with respect to $Q$ is $\UEA(\Y,\y,Q) = |I_\ambig(\Y,\y,Q)|$.
Ambiguous effective anonymity helps us reason about what an attacker can infer from the dataset.

\begin{definition}[Unambiguous uniqueness]
	We say $\y\in\Y$ is \emph{unique} with respect to $Q$ if $\EA(\Y,\y,Q) = 1$, and \emph{unambiguously unique} if $\UEA(\Y,\y,Q) = 1$.
\end{definition}

$\UEA$ is never less than $\EA$, and is very often greater in EdX. The presence of unambiguously unique records in a supposedly-anonymized dataset indicates a clear failure of syntactic anonymity.
Considering ambiguous effective anonymity makes critiquing $k$-anonymity much harder. We are giving $k$-anonymity the benefit of all the additional ambiguity that comes from missing data rather than from the anonymizer itself.

\ifUSENIX
\section{Deferred Proofs}
\label{app:proofs}

\subsection{Theorems~\ref{thm:weaker-PSO} and~\ref{thm:full-PSO}}

\fi

\ifextra
\else
\end{document}